\documentclass[11pt]{article}

\usepackage{sn-preamble} 
\usepackage{tikz}
\usepackage{verbatim}
\usepackage{cancel}
\usepackage{enumitem}
\usepackage{bm}
\usepackage{caption}
\usepackage{changepage}   

\newcommand{\ignore}[1]{}

\newcommand{\ALGWARMUP}{\textsf{ALG-WARMUP}\ }

\def\bz{{\boldsymbol{z}}}
\def\bL{{\boldsymbol{L}}}

\DeclareMathOperator{\arccosh}{arccosh}

\title{Learning Depth-$3$ Circuits with Polynomial Savings}
\author{Xi Chen\thanks{Email: \texttt{xichen@cs.columbia.edu}}
\\ \textsl{Columbia University} \and 
Animesh Fatehpuria \thanks{Email: \texttt{animesh@anthropic.com}} \\ \textsl{Anthropic} \and 
Shyamal Patel\thanks{Email: \texttt{shyamalpatelb@gmail.com}} \\ \textsl{UT Austin} \and Rocco A. Servedio\thanks{Email: \texttt{ras2105@columbia.edu}}\\ \textsl{Columbia University}}
\date{}

\begin{document}

\pagenumbering{gobble}
\maketitle
\begin{abstract}
We study the challenging problem of learning depth-three circuits in the mistake-bound model of (realizable) online learning,  which is a more difficult model than distribution-free PAC learning.
Prior algorithms for this problem, due to Servedio and Tan \cite{ST17itcs}, could only learn polynomial-size depth-three circuits of $\poly(n)$ size over $\{0,1\}^n$ with a running time of $2^{n - \Omega(n/\log n)}$, and hence they ran in time $N^{1-o(1)}$ where $N=2^n$ is the running time of a naive memorization-based approach.

In this work we substantially improve on the \cite{ST17itcs} result:  for any constant $\gamma\geq1$, we give an algorithm that learns depth-three circuits of size $n^\gamma$ with running time 
\[
2^{n-c_\gamma n},
\] 
where $c_\gamma>0$ depends only on $\gamma$ and not on $n$. Hence we achieve a polynomial savings over the naive approach for learning any polynomial-size depth-three circuit.

The main driving force behind our improvement is an improved bound on the approximate degree of width-$k$ CNFs. Inspired by Szegedy \cite{Szegedy04} and Magniez et al.~\cite{MNRS11}, the rough idea of our construction is to use a Chebyshev polynomial to efficiently amplify the spectral gap of a carefully designed random walk.
This is combined with a random-restriction-like approach to separately learn different subfunctions corresponding to different assignments to a randomly chosen set of variables, using the Perceptron algorithm over a specially designed feature space.
 A simplified warmup instantiation of our approach achieves $c_\gamma = \exp(-O(\gamma))$; by augmenting this warmup with further ingredients we obtain the sharp form of our result, which achieves $c_\gamma=\Omega(1)/\gamma$.
\end{abstract}

\newpage


\newpage

\pagenumbering{arabic}

\section{Introduction} \label{sec:intro}

For more than forty years, dating back to the first days of Valiant's PAC learning model \cite{Valiant:84}, a major thrust of research at the intersection of computational learning theory and computational complexity has been the quest to learn rich and  expressive classes of Boolean functions.
This effort has proceeded across many different learning models which incorporate many different assumptions about the data distribution and the allowed forms of access to the unknown target function. 
It has contributed to, and been enriched by, the study of a wide range of structural properties of different kinds of Boolean functions (see e.g.~\cite{EhrenfeuchtHaussler:89,blu92,LMN:93,Mansour:95,BshoutyTamon:96,BBB+:00,KS04,KOS:04,MOS04,Lee09formulas,ACRSZ10,Kane:Gotsman11,Kane:sens14,Sherstov13sicomp,Sherstov13combinatorica,Peres21NoiseStability} and many other works).  
It continues to be an active research topic to the present day; see \cite{Servedio25PAC} for a fairly comprehensive survey, as well as \cite{DIKZ25,ANPS25,APS26,CKSV26} for some representative papers of this sort that have appeared since the recent (at the time of this writing) appearance of that survey.

But alas, it is very difficult to learn rich and expressive Boolean functions.  This is especially true in the original (distribution-free) PAC learning model introduced by Valiant in \cite{Valiant:84} and in related models which are known to be at least as difficult as the distribution-free PAC model, such as the \emph{online mistake-bound} learning model \cite{lit88} which is the subject of this paper (we formally define this model in \Cref{sec:model}).  This is particularly unfortunate because the online mistake-bound model is an attractive target of study: it is simple, elegant, and avoids the strong and potentially unrealistic distributional assumptions (uniform distribution over $\zo^n$, Gaussian distribution over $\R^n$, etc.)
that characterize a large body of work giving efficient, or at least sub-exponential, learning algorithms for many types of Boolean functions.\footnote{See e.g.~\cite{Verbeurgt:90,GKS93,Schapire94,HancockMansour:91,Hancock:93,kusman93,LMN:93,GHM:96,BlumKannan:97,Vempala:10,KOS:04,MOS04,Val15,BshoutyTamon:96,OW13,Servedio:04iandc,GKK:08,KK09,Feldman10,Feldman12b,OdonnellServedio:07,KKMS:08,KOS:08,GKM12colt,CIKK16,DKKTZ23}, among many others, for papers that learn different types of Boolean functions under such distributional assumptions.} 

The difficulty of distribution-free PAC learning or online mistake-bound learning of complex functions is perhaps most acutely evident when we measure complexity through the lens of \emph{circuit depth}.  A number of sub-exponential time algorithms have been developed for learning different kinds of depth-two circuits.  For example, in the distribution-free PAC learning model, $2^{\tilde{O}(n^{1/3})}$-time algorithms have been known for polynomial-size DNF or CNF formulas over $\zo^n$ for more than two decades \cite{KS04}; $2^{\tilde{O}(\sqrt{n})}$-time algorithms have been known for nearly as long for polynomial-size $\F_2$-polynomials (equivalently, $\mathsf{Parity} \circ \mathsf{And}$ circuits) \cite{HellersteinServedio:07}; and recent work has given $2^{\tilde{O}(\sqrt{n})}$-time algorithms for depth-two circuit classes such as $\mathsf{Any}_{\ell(n)}  \circ \mathsf{LTF}$ for $\ell(n) = \tilde{O}(\log n)$ \cite{APS26} and intersections of polynomially many polynomial-weight halfspaces over $\zo^n$ \cite{PV26}.
But despite intensive research efforts, and the existence of quasipolynomial-time algorithms in the uniform-distribution PAC model, no $2^{o(n)}$-time distribution-free PAC or online mistake-bound learning algorithms are known for even the simplest depth-3 circuit class of polynomial-size de Morgan (AND/OR) circuits. 

Motivated by the difficulty of obtaining subexponential-time learning for rich circuit classes, \cite{ST17itcs} suggested a change of perspective.  That work introduced the research goal of \emph{learning with non-trivial savings}, i.e.~obtaining learning algorithms for functions over $\zo^n$ that run in time $2^{n-s(n)}$ for some \emph{savings function} $s(n)$ which is as large as possible.
\cite{ST17itcs} highlighted connections with other circuit analysis problems such as compression \cite{CKKSZ} and gave a number of preliminary results on learning different circuit classes with non-trivial savings. In particular, for polynomial-size depth-three circuits, \cite{ST17itcs} gave a (randomized) learning algorithm that learns in time $2^{n - n/\Omega(\log n)}$ (see their Theorem~3).  
A number of other works have explored different aspects of the learning-with-nontrivial-savings framework \cite{OliveiraSanthanam17,Karchmer24}, but none improved on the above running time for depth-three circuits.

\medskip
\noindent {\bf Prior work: The approach of \cite{ST17itcs}.}
Since our algorithmic approach to learning depth-three circuits  builds on that of \cite{ST17itcs}, let us give a high-level description of their algorithm and its analysis, slightly simplifying certain aspects to ease the presentation.  At a high level, the \cite{ST17itcs} algorithm works as follows (we will refine this description a bit in the subsequent discussion):

\begin{enumerate}

\item Draw a random subset $\bL \subseteq [n]$ of ``live'' variables from a suitable distribution.  Call an assignment $z \in \zo^{[n] \setminus \bL}$ fixing all other variables an \emph{address}. Let $z \circ \zo^{\bL}$ denote the subcube obtained by fixing the variables outside $\bL$ to the address $z$.

\item For each address $z \in \zo^{[n] \setminus \bL}$, run a separate copy of an algorithm $A$ that  learns polynomial threshold functions of degree $d$ in the online mistake-bound model\footnote{It is well known that using linear threshold function learning approaches, such algorithms are known that have running time $\poly\pbra{{m \choose \leq d}}$ when run over $\zo^m$, see e.g. \cite{MT:94}.}  over the examples belonging to the subcube $z \circ \zo^{\bL}$. (This is done by routing each counterexample $x \in \zo^n$ that is received by the online learner to the copy of $A$ that is running over the subcube with the same address as $x$.)

\end{enumerate}

\noindent
The idea of the analysis is simple and is based on switching lemmas from low-level circuit complexity theory.
Let $F$ be the depth-three target circuit, and suppose without loss of generality that $F = g_1 \vee \cdots \vee g_s$ where each $g_i$ is a $\poly(n)$-size CNF formula. (The case in which the top gate is an AND function is handled by essentially the same arguments, switching the roles of AND and OR throughout. For the rest of the paper we will suppose that the depth-three target circuit is an $\mathsf{Or}\circ\mathsf{And}\circ\mathsf{Or}$ circuit.)  
For a random choice of $\bL$ and a typical (i.e.~random) choice of the address $\bz \in \zo^{[n]\setminus \bL}$, considering $F$ on the subcube $\bz \circ \zo^{\bL}$ is precisely the same as applying a \emph{random restriction} to $F$; we denote this restricted function over $\bz \circ \zo^{\bL}$ by $F|_{\bL,\bz}$.  
The well-developed theory of switching lemmas tells us that ``random restrictions simplify small-depth circuits''; in more detail, with high probability each restricted CNF $g_1|_{\bL,\bz},\dots,g_s|_{\bL,\bz}$ ``collapses'' to a depth-$d$ decision tree, which in particular is a degree-$d$ polynomial. 
It follows that under such an $(\bL,\bz)$ pair, the restriction $F|_{\bL,\bz}$ is a polynomial threshold function of degree at most $d$.  
Letting $M$ denote the mistake bound of algorithm $A$ when it is run on a degree-$d$ polynomial threshold function over $|\bL|$ variables, we get that for each subcube $\bz \circ \zo^{\bL}$ such that the high-probability simplification event indeed takes place, the corresponding copy of $A$ will make at most $M$ mistakes.  
(There will also be subcubes where the low-probability failure-to-simplify event takes place; the algorithm handles these subcubes simply by aborting the execution of $A$ within each subcube once that execution has made more than $M$ mistakes, and switching to a memorization-based approach that runs in time $2^{|\bL|}$ over that subcube.\footnote{This is the refinement that was alluded to above.})

The detailed analysis of \cite{ST17itcs} involves choosing $\bL$ from the right distribution as well as trading off the failure probability of simplification under random restrictions, using recent sophisticated ``multi-switching lemmas'' \cite{Hastad2014CorrelationParity}, against the running time of the polynomial threshold function learning routine. Intuitively, choosing a larger value of $d$ decreases the probability of the switching lemma failure event (failure on a subcube is expensive since it involves paying ``full fare'' with a $2^{|\bL|}$ running time for the memorization-based approach on that subcube) but increases the $\poly\pbra{{|\bL| \choose \leq d}}$ running time for typical subcubes where the switching lemma failure event does not take place. Optimizing these tradeoffs leads to choosing a random subset $\bL \subseteq [n]$ by including each variable in $\bL$ with probability $p=1/O(\log n)$, and leads to the overall $2^{n - n/O(\log n)}$ running time that was stated earlier.

\subsection{Our main result: Improved running time for learning depth-three circuits.}
The main result of this paper is a faster algorithm than that of \cite{ST17itcs} for learning depth-three circuits in the online mistake-bound model (the same model considered by \cite{ST17itcs}).   Informally, for any constant $\gamma \geq 1$, we give an algorithm that learns depth-three circuits of size $n^\gamma$ with running time $2^{n-c_\gamma n}$. Hence, as stated in the abstract, 
we achieve a \emph{polynomial} savings over the naive approach for any polynomial-size target depth-three circuit, though the polynomial savings that is achieved grows small as the constant exponent of the polynomial-size depth-three circuit grows large.

Let us state our main result in more detail:

\begin{theorem}
\label{thm:main}
For every constant $\gamma\ge1$ there is a constant $c_\gamma > 0$ and a randomized online mistake-bound learning algorithm such that for any target function $F: \zo^n \to \zo$ that is computed by a depth-three circuit of size at most $n^\gamma$, with probability at least $1-O(1/n)$ the algorithm learns $F$ in time at most 
\[
2^{n - c_\gamma n}.
\]
\end{theorem}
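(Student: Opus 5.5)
Our plan follows the \cite{ST17itcs} template of random restrictions plus subcube-wise learning. The key change is to keep each restricted bottom-level CNF as a \emph{width-$k$ CNF} and use a sharper approximate-degree bound, instead of collapsing it to a decision tree.

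\textbf{Setup.} Write $F=g_1\vee\cdots\vee g_s$ with $s\le n^\gamma$, each $g_i$ a CNF of size at most $n^\gamma$. Choose $\bL$ by including each coordinate independently with a constant probability $p=p_\gamma$. Then for a random address $\bz$, a standard (H\aa stad-style) switching lemma applied to the bottom clauses shows the following: except with probability $2^{-\Omega(kn)}$-ish over $(\bL,\bz)$, every restricted CNF $g_i|_{\bL,\bz}$ is equivalent to a width-$k$ CNF, where $k=O(\gamma)$ is a constant.

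More precisely, a clause of width $>k$ survives unsatisfied only with probability roughly $(p\cdot\text{const})^{k}$. We must bound the probability that \emph{many} long clauses survive, so that the expensive failure subcubes contribute total cost at most $2^{(1-c_\gamma)n}$. Each failure subcube costs $2^{|\bL|}$, so it suffices that the failure probability is at most $2^{-c n}$ with $c>c_\gamma$.

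\textbf{Approximate degree.} The main new ingredient is an improved bound on the approximate degree of width-$k$ CNFs on $m=|\bL|\approx pn$ variables: the degree should be at most $(1-\delta_k)m$, or better $O(m^{1-\delta_k})$-type sublinear savings.

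We would build it, following Szegedy and Magniez et al., from a random walk on $\zo^m$. Its stationary behavior detects a falsified clause; amplifying the spectral gap with a Chebyshev polynomial of degree about $\sqrt{1/\text{gap}}$ turns the walk operator into a polynomial that approximates $g_i|_{\bL,\bz}$ pointwise to error $\epsilon=1/(3s)$.

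Summing these approximants, the OR of $s$ functions becomes a polynomial threshold function of degree $d\le (1-\delta)m$ for a $\delta=\delta_\gamma>0$. We take $\text{sign}\bigl(\sum_i \tilde g_i - 1/2\bigr)$, which is why error $1/(3s)$ suffices; this costs a $\log s=O(\gamma\log n)$ factor, absorbed since $m=\Theta(n)$.

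\textbf{Learning.} On each good subcube, run Perceptron (or Winnow) over the monomial feature space of degree $\le d$. This needs margin control. We would use a feature space in which the approximating polynomials have bounded coefficient weight; the Chebyshev/walk construction gives weight at most $2^{O(d)}$-ish. The mistake bound and running time are then $\binom{m}{\le d}^{O(1)}\cdot 2^{O(d)}$, and with $d=(1-\delta)m$ and $p$ small, this is at most $2^{(1-c)m}$.

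\textbf{Accounting.} The total time is
\[
2^{n-|\bL|}\cdot\bigl(2^{(1-c)|\bL|}+\Pr[\text{fail}]\cdot 2^{|\bL|}\bigr)\le 2^{n-c_\gamma n}.
\]
Concentration of $|\bL|$ and a Markov argument over $\bL$ give the $1-O(1/n)$ success probability; we abort any copy exceeding its mistake budget and fall back to memorization.

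\textbf{Main obstacle.} The main obstacle is the approximate-degree step: obtaining degree strictly less than $(1-\delta)m$ with small coefficient weight for width-$k$ CNFs of polynomial size, with $\delta$ depending only on $\gamma$. The trivial bound is degree $m$, and standard constructions give only $O(\sqrt{m}\cdot\text{poly})$ for limited widths.

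We would first try the following walk. Its state is a partial assignment of clause checks, it moves by resampling one variable, and it accepts when a violated clause is found. Its gap is about $(1/2)^k$ per step relative to the cube, so Chebyshev amplification yields degree about $m\cdot 2^{-k/2}$ savings. Balancing $k=\Theta(\gamma)$ against the switching-lemma failure probability produces $c_\gamma=\exp(-O(\gamma))$, matching the warmup. Improving this to $\Omega(1/\gamma)$ would require the further refinements the paper alludes to.
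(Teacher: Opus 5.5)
\textbf{The restriction step fails, and the missing idea is the width-reducing set.} You claim that with constant $p$ and constant $k=O(\gamma)$, every $g_i|_{\bL,\bz}$ is a width-$k$ CNF except with probability $2^{-\Omega(kn)}$. This is false. If $F$ is just the clause $x_1\vee\cdots\vee x_{k+1}$, then with probability $p^{k+1}$ all of its variables are live, and the restriction is a width-$(k+1)$ clause, which is not equivalent to any width-$k$ CNF. So $\Pr[\text{fail}]\ge p^{k+1}=2^{-o(n)}$ for every $k=o(n)$, and your accounting then charges at least $\Pr[\text{fail}]\cdot 2^n=2^{n-o(n)}$ for memorization. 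Moreover, making surviving wide clauses rare by a union bound over the $n^\gamma$ clauses (each survives with more than $k$ live literals with probability about $(3p)^{k+1}$) requires $k=\Omega(\gamma\log n/\log(1/p))$, so $k$ cannot be a constant.

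You say one must control ``many'' surviving long clauses, but you never say what to do on subcubes where only a few survive. The paper's answer is the greedy set $Y(L,z)\subseteq L$, chosen so that every surviving clause has at most $k$ literals outside $Y$. A union bound over greedy clause sequences, with each fresh chunk $\Delta_j$ contributing $(3p)^{\sigma_j}$, gives $\Pr[|Y(\bL,\bz)|>\ell/100]\le 2^{-\ell/5}$ for $k=\lceil c_0\log n\rceil$ and $p=2^{-2\gamma/c_0}$. The set $Y$ is then absorbed into the approximator via $\Phi_i=\sum_{w\in\{0,1\}^Y}\mathbf{1}[x_Y=w]\,\Phi_{i,w}$, which costs only $|Y|$ in degree and $4^{|Y|}$ in weight. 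Memorization is needed only on the $2^{-\Omega(\ell)}$ fraction of subcubes with $|Y|>\ell/100$.

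\textbf{The approximation and learning steps also do not close.} Since $k=\Theta(\log n)$ is forced, the approximator must avoid any $2^{\Omega(k)}$ loss; this is exactly why Sherstov's $2^{k/2}$ bound is useless here. Your walk, whose state is a ``partial assignment of clause checks'' that resamples variables with gap $2^{-k}$, does not define a polynomial in the fixed input $x$, and it would incur such a loss anyway. The paper instead walks on the Johnson graph $J(\ell,r)$, whose vertices are $r$-sets of \emph{inspected coordinates}. It marks $R$ when $x_R$ falsifies a clause, so the marked fraction is at least $\epsilon_0=((r-k)/\ell)^k$. The gap $1/r$ then gives $\|\Pi_{\overline S}M_r\Pi_{\overline S}\|\le 1-\epsilon_0/r$. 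A Chebyshev polynomial of degree $m=O(\sqrt{r/\epsilon_0}\log(1/\epsilon))$ yields total degree $r+m=O(\ell^{k/(k+1)}\log^{2/(k+1)}(1/\epsilon))$ and weight $32^{r+m}$.

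Finally, degree $d=(1-\delta)\ell$ with small $\delta$ is worthless for Perceptron over monomials. The squared feature norm is $\binom{\ell}{\le d}\ge 2^{\ell-1}$, the weight bound $32^{d}$ already exceeds $2^{\ell}$, and each trial costs at least the number of features, so there are no savings. You need $d+|Y|$ to be a small constant fraction of $|\bL|$. The paper gets $d\le\ell/1000$ from $\ell^{k/(k+1)}\le 2\ell n^{-1/(k+1)}$ with $k=c_0\log n$. This gives at most $2^{0.14\ell}$ mistakes, and $2^{0.14\ell}$ time per trial, on each tame subcube. To reach $c_\gamma=\Omega(1/\gamma)$ the paper moves to block-indicator features with $\|\Phi(x)\|^2=2^B$ and an exponential moment bound on $|Y|$.
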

\begin{remark}
We prove two versions of \Cref{thm:main}.  First, as an entirely self-contained warmup, we prove the theorem with $c_\gamma = 2^{-O(\gamma)}$; this is done in \Cref{sec:warmup}.  Then, by augmenting the ideas and algorithmic approaches of the warmup with several new technical ingredients, in \Cref{sec:real} we prove the theorem with $c_\gamma = \Omega(1/\gamma)$.
\end{remark}

At a high level our algorithm is structured quite similarly to the \cite{ST17itcs} algorithm: like \cite{ST17itcs} we first draw a random subset $\bL$ of variables and then run a separate copy of a learning algorithm for each subcube $z \circ \zo^{\bL}$.  However, the \emph{analysis} of our algorithm is quite different from that of \cite{ST17itcs}: we do not use any switching lemma results from the literature (that role is instead played by a one-page self-contained calculation), and we crucially rely on \emph{approximating polynomials} and (for our main result) a novel \emph{feature representation}, two ingredients which are not present in any form in the \cite{ST17itcs} analysis.

In the rest of this introduction we give an initial overview both of our warmup $2^{n - 2^{-O(\gamma)}n}$-time algorithm and of our main $2^{n - \Omega(1/\gamma)n}$-time algorithm, and highlight some of the key tools underlying the new results and how the new result differs from \cite{ST17itcs}.
The full technical overview (see \Cref{sec:tech-overview}) will give additional details and intuition, and then the complete proofs will be given in \Cref{sec:prelims} and beyond.

\medskip
\noindent {\bf The central tool:  Improved approximating polynomials for width-$k$ CNF formulas.}
The technical heart of our improvement over \cite{ST17itcs} comes from a new construction of \emph{pointwise approximating polynomials} for width-$k$ CNF formulas.  Let us recall the previous state of the art here, which is due to Sherstov \cite{Sherstov20}:  Theorem~5.1 of  \cite{Sherstov20} states that if $f: \zo^\ell \to \zo$ is computed by a width-$k$ DNF or CNF formula, then for $\eps \in (0,1/2)$ there is a real polynomial $p: \zo^\ell \to \R$ that pointwise approximates $f$ to an additive $\pm \eps$ error on every $\zo^\ell$ input and has degree
\begin{equation}
\deg(p) \leq O\pbra{2^{k/2} \ell^{k/(k+1)} \log^{1/(k+1)} (1/\eps)}.
\label{eq:sherstov}
\end{equation}
Inspecting the degree bound given by \Cref{eq:sherstov}, we see that because of the $2^{k/2}$ factor the  bound is trivial (larger than $\ell$) for any value $k = \Theta(\log \ell)$. Unfortunately, $k=\Theta(\log \ell)$ is precisely what can be achieved by applying known switching lemmas to unrestricted polynomial-size CNF formulas; and as sketched above, such switching lemmas are central to the proof approach of \cite{ST17itcs}.  This is why the analysis of \cite{ST17itcs} cannot make use of the \cite{Sherstov20} results.

One of our main contributions is a new approximate degree bound for width-$k$ CNFs (or DNFs). We give intuition for this new bound in \Cref{sec:tech-overview-apx-degree} and the new result is stated in detail and proved in \Cref{thm:apx-degree}. Its statement is essentially the same as the result of \cite{Sherstov20} stated above, but now with degree bound
\begin{equation}
\deg(p) \leq O\pbra{\ell^{k/k+1} \log^{2/k+1}(1/\eps)}.
\label{eq:us}
\end{equation}
While this bound is slightly worse as a function of $\eps$, the difference is small (only a constant factor) in the $k=\Theta(\log \ell), \eps = 1/\poly(\ell)$ regime; moreover, crucially the new bound no longer has the $2^{k/2}$ factor, which is a considerable one in this regime. 
In particular, inspecting \Cref{eq:us}, we see that with $\eps = \poly(1/\ell)$ it gives a nontrivial bound even for a suitable choice of $k=\Theta(\log \ell)$.  
 
Briefly, the new construction of approximating polynomials treats evaluating a width-$k$ CNF on a given assignment $x$ as searching for a small subset of the input variables that contains a clause which is falsified by $x$.  This search is carried out via a random walk over such subsets whose progress is amplified by a Chebyshev polynomial, in the spirit of the quantum walk search algorithms of Szegedy and Magniez et al.~\cite{Szegedy04,MNRS11}. Balancing the size of the subsets against the length of the walk gives the $\ell^{k/(k+1)}$ exponent; a fuller explanation is given in \Cref{sec:tech-overview-apx-degree}.

\medskip

\noindent {\bf The warmup.}  Having a nontrivial bound in \Cref{eq:us} even when $k=\Theta(\log \ell)$ is  helpful for us for the following reason.  Under a typical outcome of the live variable set $\bL$, we can ensure that for almost all addresses $z \in \zo^{[n]\setminus \bL}$  as in the \cite{ST17itcs} approach, each depth-two CNF $g_i$ simplifies under the restriction $(\bL,z)$ in a useful way.  A detailed description of the ``useful way'' in which the CNFs simplify is somewhat involved and is deferred to \Cref{sec:tech-overview}, but roughly speaking, the idea is that given any assignment to a typically-not-too-large ``width-reducing set'' $Y(\bL,z)$ of variables (which is a subset of $\bL$), the restricted CNFs $g_i|_{\bL,z}$ have width at most $k$ where $k=\Theta(\log |\bL|)$.  Since our new polynomial approximation theorem, \Cref{thm:apx-degree}, gives us that each such width-$k$ CNF has a high-accuracy approximating polynomial of non-trivially small degree, it follows that the restricted target function $F|_{\bL,z}$ (which is simply an OR over the width-$k$ restricted CNFs)  has a low-degree polynomial threshold function representation for a typical assignment $z$.  For each such assignment we use the Perceptron algorithm  over an expanded feature space of all low-degree monomials to learn the target function under the restriction corresponding to that assignment (and as in \cite{ST17itcs} we pay full fare with a memorization-based approach for the few exceptional assignments).
This concludes our initial high-level description of our warmup result with its $2^{n - 2^{-O(\gamma)}n}$ running time.

\medskip
\noindent {\bf The full result.}
Our full result, with its $2^{n - \Omega(1/\gamma)n}$ running time, is achieved by  augmenting the above approach with several additional ingredients.  One new ingredient is based on a scheme of grouping the variables in $\bL$ into ``blocks.'' This lets us go beyond approximation by low-degree polynomials and instead construct a low-weight linear form, over a carefully structured set of ``block indicator'' features, which is a pointwise high-accuracy approximator for any width-$k$ CNF.  As in the warmup, given any assignment to the ``width-reducing set'' $Y(\bL,z) \subseteq \bL$, the overall function under restriction $(\bL,z)$ becomes an OR of width-$k$ CNFs, and hence is expressible as a low-weight linear threshold function over the block indicator features.  Another new ingredient that plays a crucial role is an exponential moment bound on the size of the ``width-reducing set'' $Y(\bL,\bz)$ mentioned above. 
With these modifications to the analysis, the algorithm again runs many copies of the Perceptron algorithm --- one for each assignment to the non-live variables --- but now over the new feature space of block indicators rather than over the feature space of low-degree monomials.  A careful analysis of the aggregate performance of these $2^{n-|\bL|}$ copies of Perceptron, using the exponential moment bound and the existence of the low-weight linear threshold functions (over block indicator features) for most restrictions $F|_{\bL,z}$ of $F$, gives us the full $2^{n - \Omega(1/\gamma)n}$ result.

\medskip
\textbf{Statement on AI Use.} The main ideas in this paper were proposed by Anthropic's Fable 5 model. The authors additionally used Sol and Fable to aid in determining how to present the material and check for typos and errors in the manuscript. The human authors studied, verified and streamlined these ideas and wrote the paper adding motivation and exposition and substantially reorganizing the technical material. The final paper reflects the understanding of the human authors, who take full responsibility for all of the technical content, expository content, and references in the paper.


\section{Technical Overview} \label{sec:tech-overview}

\subsection{The new approximate degree bound for width-$k$ CNF formulas (\Cref{thm:apx-degree})}
\label{sec:tech-overview-apx-degree}

{\bf Background and contrast with prior work:}  As mentioned earlier, prior to our work Sherstov \cite{Sherstov20} had shown that every width-$k$ CNF formula over $\zo^\ell$ has a pointwise $\eps$-approximating polynomial of degree $O(2^{k/2} \ell^{k/(k+1)} \log^{1/(k+1)} (1/\eps))$, and our new  \Cref{thm:apx-degree} gives a polynomial of degree $O(\ell^{k/k+1} \log^{2/k+1}(1/\eps)).$\footnote{In fact our polynomial has one-sided error (whenever the CNF formula outputs 1, the output of our polynomial is also exactly 1), but this property will not be important for us.}
While these bounds superficially appear  similar to each other, the underlying proof techniques are quite different.  The approximator of \cite{Sherstov20} is built using a recursive approach based on composing an ``outer'' approximator for the $\mathsf{And}$ function with high-accuracy ``inner'' approximators for width-$(k-1)$ CNFs. Very roughly speaking, the recursive nature of the construction requires the ``inner'' approximators to have higher and higher accuracy at deeper and deeper levels of the recursion, and this need for high accuracy ultimately results in the $2^{k/2}$ factor that is present in the \cite{Sherstov20} bound. In contrast, as described below our approach only performs one stage of approximation ``at the end'', so it suffers no such recursive penalty.

\medskip

\noindent {\bf Our approach:}  At a high level, our approach is inspired by quantum walk algorithms  of  Szegedy and Magniez et al.~\cite{Szegedy04,MNRS11} that search for a ``marked set''.
For a width-$k$ CNF formula $g$ over $\zo^\ell$, a crucial object for our approach is the \emph{Johnson graph} $J(\ell,r)$ (for a suitable choice of $r > k$) whose vertices correspond to size-$r$ subsets of $\ell$ and whose edges correspond to subsets that differ in exactly one element.  A useful intuition is that when the walk is at a particular $r$-element subset $R \subseteq [\ell]$, this corresponds to an algorithm ``inspecting'' the values of the coordinates in $R$. 
The ``inner'' part of the polynomial $q(x)$ that we construct corresponds to a random walk on the Johnson graph, which is set up in such a way that the walk is ``killed'' at a vertex $R$ if $R$ contains all the variables of some clause of $g$ that is falsified by $x$ (and thus $R$ causes $g(x)$ to be 0). Vertices $R$ that ``witness'' that $g(x)=0$ in this way correspond to the ``marked sets'' that we would like our walk to hit (if they exist).  Observe that since adjacent vertices of the Johnson graph differ in only one element of $[\ell]$ (i.e.~one variable), an $m$-step walk corresponds to a polynomial of degree $r+m$ (where the $r$ is because of the initial $r$-element vertex at which the walk starts).  So the larger $r$ is, the larger the fraction of vertices that will be ``marked'', but larger $r$ also makes our approach more expensive (i.e., make the polynomial have higher degree).

The natural approach to analyzing how quickly a naive random walk will reach a point in $S$ is by analyzing the eigenspectrum of $J(\ell,r)$ and in particular the spectral gap between the second largest eigenvalue and the largest eigenvalue. This analysis, which is sketched in \Cref{sec:approx-degree}, shows that --- not surprisingly --- a naive random walk leads to a polynomial of degree at least $\ell$, which is a trivial and useless bound.  The crux of our improvement, and the key idea imported from the quantum walk literature, is to \emph{apply a (shifted and scaled) version of the Chebyshev polynomial as the ``outer'' part of our  construction to amplify the spectral gap of the random walk}. (For intuition, using a naive powering of the relevant transition matrix instead of applying the Chebyshev polynomial would correspond instead to a standard random walk.)  We remark that the Chebyshev polynomial is applied only once; this is the ``one stage of approximation'' that was alluded to earlier in this section.  The Chebyshev polynomial enables a ``square root savings'' in degree; a bit more precisely, in our context the savings is from requiring $r/\eps_0$ steps of a naive random walk (where here $\eps_0$ captures the fraction of vertices $R$ which are ``marked'') to requiring only $\sqrt{r/\eps_0}$ steps when the Chebyshev polynomial is used (see \Cref{sec:approx-degree} for all details).    After carefully optimizing parameters, this is the savings that results in the bound of \Cref{thm:apx-degree}.

\subsection{The warmup result:  learning in time $2^{n - 2^{-O(\gamma)}n}$}
\label{sec:tech-overview-warmup}

Let us begin by elaborating on the ``useful way'' (mentioned earlier) in which the CNFs $g_1,\dots,g_s$ typically simplify under the random restriction $(\bL,\bz)$.
The crucial notion turns out to be the notion of a \emph{width-reducing set}, which we denote $Y(L,z)$, that is induced by $L$ and $z$ over the subcube $z \circ \zo^{L}$.  Intuitively, this is a subset $Y(L,z) \subseteq [L]$ with the property that \emph{every clause that survives in every restricted CNF $g_i|_{L,z}$ has at most $k$ literals outside of $Y(L,z)$}; in other words, every assignment to the variables in $Y(L,z)$\footnote{Note that such an assignment can be viewed as a second round of restriction after the variables in $[n] \setminus L$ have already been assigned values according to $z$.} causes each CNF $g_1|_{L,z},\dots,g_s|_{L,z}$ CNF to have width at most $k$. (This is why $Y(L,z)$ is called a ``width-reducing set.'')

We give a probabilistic analysis showing that for a random restriction $(\bL,\bz)$ (where $\bL$ is a random subset of $[n]$ chosen by independently including each element with a carefully chosen probability $p=2^{-O(\gamma)}$, and $\bz$ is uniform over $\zo^{[n] \setminus \bL}$), a simple greedy procedure succeeds in constructing a not-too-large width-reducing set $Y(\bL,\bz)$ with extremely high probability (where here the value of $k$ is taken to be an absolute constant, independent of $\gamma$, times $\log n$; see \Cref{eq:warmup-parameters}). In particular, a consequence of our analysis  (\Cref{simplecoro1a}) states that with high probability over the (one-time initial) choice of $\bL$, only a $2^{- \ell/6}$ fraction of outcomes of $z \in \zo^{[n] \setminus \bL}$ have $|Y(\bL,z)| > \ell/100$, where $\ell=|\bL|$ (we call these ``hard'' outcomes of $z$).

Now, let $d$ be the degree bound on approximating polynomials for width-$k$ CNF formulas that is provided by \Cref{eq:us}. For any $z$ such that $|Y(\bL,z)| \leq \ell/100$, by combining the above ingredients and setting parameters carefully, it is not difficult to establish the existence of a polynomial $Q_z$ of degree at most $d + \ell/100 \leq \ell/90$
over the variables in $\bL$ such that (i) $Q_z(x)$ is never ``too close to zero'' and the coefficients of $Q_z$ aren't ``too large;'' and (ii) the sign of $Q_z(x)$ agrees with $(-1)^{F|_{\bL,z}(x) +1}$ for every input $x \in \zo^{\bL}$ (see \Cref{lem:PTF-with-margin}).  

Given this, a standard application of the Perceptron algorithm over the feature space of all monomials of degree at most $\ell/90$ can be used to learn $F|_{\bL,z}$ over the subcube $z \circ \zo^{\bL}$.  As discussed earlier, for the ``hard'' subcubes that have $|Y(\bL,z)| > \ell/100$, we give up (after making a number of mistakes which exceeds the Perceptron algorithm's mistake bound, which is how we know when to give up) and use a memorization-based approach. The crucial point is that since only a $2^{- \ell/6}$ fraction of subcubes are ``hard'' ones on which we will need to give up, and we make only $2^{\ell}$ mistakes on each such subcube (since it contains only $2^\ell$ points), the overall approach is able to incur a $2^{-\Omega(\ell)}$ multiplicative runtime savings. Since $\ell=|\bL|\approx pn = 2^{-O(\gamma)}n$, this yields the warmup version of our theorem.

\subsection{The full result: learning in time $2^{n - \Omega(1/\gamma)n}$}
\label{sec:tech-overview-full}

How can we improve the $2^{n - 2^{-O(\gamma)}n}$ running time of the warmup?
A first observation is that our algorithmic approach will take time at least $2^{n-|\bL|}$, since it needs to at least ``touch'' each of the $2^{n-|\bL|}$ subcubes corresponding to having $\bL$ as the live set.  Since $|\bL|$ is roughly $pn$, this motivates taking $p$ to be a larger value than in the warmup. It turns out that the best choice for us is to take $p$ to be a small absolute constant (for concreteness we take $p=2^{-10}$), which leads to needing to take $k=\Theta(\gamma \log n)$ in \Cref{thm:apx-degree}.  When $k$ is this large, though, we have a problem, which is that the degree bound that comes out of \Cref{thm:apx-degree}, while less than $\ell$, is at least $\ell/2$; this means that there are $\Omega(2^\ell)$ ``low-degree'' monomials over the variables in $\bL$, which is too many to get any savings using Perceptron over those monomials as was done in the warmup.

Our solution to this problem exploits the fact that the Perceptron algorithm's performance depends only on the norm of its input examples and the margin of the linear separator, and not on the ambient dimension of the space over which it is run.  Leveraging this fact, we fix the above problems by using two new ingredients that 
go beyond the approach of the warmup:

\begin{enumerate}

    \item {\bf A linear approximator over ``block indicator'' features rather than a low-degree polynomial approximator over standard monomials.}  We partition the set $\bL$ of live variables into disjoint blocks $\bL_1,\dots,\bL_B$ of size at most $O(\gamma)$. This is useful for us because it enables us to define a new set of features which correspond to indicator functions of the input taking a particular assignment over a particular collection of blocks.  By adapting the proof of \Cref{thm:apx-degree} to work with these new features, we are able to prove a version of that result which approximates any width-$k$ CNF as a (not-too-high-weight) linear form over those features. Those features turn out to precisely enable the useful properties of Perceptron mentioned above. However, to get a non-trivial overall bound, we need a second ingredient:

    \item {\bf An exponential moment bound, rather than a tail bound, for the width-reducing set $Y(\bL,\bz).$} To make the analysis go through we require tighter control over the random variable $Y(\bL,\bz)$ than in the warmup.  In the warmup it was sufficient to establish a tail bound showing that $|Y(\bL,\bz)|$ is large only with very small probability, but now we will need to bound the probability that $Y(\bL,\bz)$ achieves a given size for essentially all possible sizes.  This is done by giving an exponential moment bound on the random variable $|Y(\bL,\bz)|$; this bound allows us to control the aggregate number of mistakes (and hence the total running time) across all of the $2^{n-|\bL|}$ copies of Perceptron that the overall algorithm executes.

\end{enumerate}


\section{Preliminaries} \label{sec:prelims}

\subsection{The learning model: online mistake-bound learning}
\label{sec:model}

We follow  \cite{ST17itcs} and present our results in the \emph{online mistake-bound} learning model.
As is well known, this model is equivalent to the model of exact learning from equivalence queries only \cite{Angluin:88}.  
The model is very simple and operates as follows:  Given a \emph{target class} $\calC$ 
of functions from $\{0,1\}^n$ to $\{0,1\}$, let $f$ be an unknown \emph{target function} that is promised to belong to $\calC$.  
The learning process unfolds in a sequence of \emph{trials}; throughout the trials the learning algorithm maintains and updates a \emph{hypothesis function} $h: \{0,1\}^n \to \{0,1\}$ (to be more precise, it maintains a  Boolean circuit that computes $h$).  At the start of each trial,
\begin{itemize}

\item If the hypothesis $h$ is logically equivalent to $f$, i.e.~$h(x)=f(x)$ for all $x \in \{0,1\}^n$, then the learner succeeds and the process stops.  (No running time is incurred for this final trial.)

\item Otherwise, in unit time an arbitrary \emph{counterexample}, i.e.~an input $x \in \zo^n$ such that $h(x) \neq f(x)$, is presented to the learning algorithm, and the learning algorithm may update its hypothesis $h$ before the start of the next trial.

\end{itemize}

The running time of a learning algorithm in this model is simply the worst-case running time until the algorithm succeeds, taken over all $f \in \calC$ and all possible sequences of counterexamples. Note that the running time of each trial includes both the time required to evaluate $h$ on an arbitrary example in $\zo^n$ as well as the time required to update $h$ before the next trial.\footnote{If the time to evaluate $h$ does not count towards the running time of the learning algorithm, then a simple trick of memorizing the example sequence and ``offloading all the computation to the evaluation of $h$'' can learn the unrestricted class of all $\poly(n)$-size circuits in $\poly(n)$ time, see Exercise 2.4 of \cite{KearnsVazirani:94}.}
Thus, the running time of an algorithm in this model can be upper bounded by (mistake bound)$\cdot$(maximum time required to evaluate a hypothesis + maximum time required to update a hypothesis).

The online mistake-bound learning algorithm that we give will be randomized.  We say that a randomized algorithm learns class $\calC$ in time $T(n)$ if for any target function $f \in \calC$, the algorithm succeeds with probability
at least $9/10$ (over its internal coin tosses) after at most $T(n)$. We remark that our algorithm succeeds within the stated time bound with probability at least $1-O(1/n)$, and that it uses randomness only once at the very beginning of the algorithm to randomly select the live set $\bL \subset [n]$ of variables. 

Finally, we recall the well-known fact that that learning results in the online mistake-bound learning model easily yield corresponding learning results in the distribution-free PAC model.  More precisely, if a class $\calC$ is learnable in time $T(n)$ in the online mistake-bound model, then $\calC$ is learnable to confidence $1-\delta$ and accuracy $1-\eps$ in the PAC model in time $T_{\mathrm{PAC}} = O({\frac {T(n)}\eps} \ln ({\frac {T(n)}{\delta}}))$ (see the subsection ``The learning model we consider'' in Section~2 of \cite{ST17itcs}).
Hence the polynomial savings that we achieve in the online mistake-bound model directly carry over to the distribution-free PAC model.

\subsection{Johnson Graph}
\emph{Johnson graphs} are central to our results on the approximate degree of width-$k$ CNFs.

\begin{definition}[Johnson Graph]
The Johnson graph $J(\ell,r)$ is a graph whose vertices correspond to $r$-element subsets of $[\ell]$, where two such subsets are adjacent vertices if their intersection has size $r-1$.
\end{definition}

We will consider random walks on the Johnson graph, where a random step consists of moving to a random neighbor of the current vertex. Hence we will crucially need to understand the spectra of Johnson graphs.

\begin{theorem}[\cite{Brouwer1989}]
\label{thm:spectra}
    Let $M$ denote the transition matrix for the above random walk on $J(\ell,r)$. Then for any $i = 0, 1, \dots, \min(r,\ell-r)$ we have that $M$ has eigenvalue
    \[\frac{(r-i)(\ell-r-i) - i}{r(\ell-r)}\]
    with multiplicity $\binom{\ell}{i} - \binom{\ell}{i-1}$, where $\binom{\ell}{-1} = 0$.
\end{theorem}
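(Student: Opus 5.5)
The plan is to use the standard representation theory of the Johnson scheme. The walk matrix is $M = A/(r(\ell-r))$, where $A$ is the adjacency matrix of $J(\ell,r)$. This holds because every vertex has exactly $r(\ell-r)$ neighbours: we remove one of the $r$ elements and add one of the $\ell-r$ others. So it suffices to show that $A$ has eigenvalue $(r-i)(\ell-r-i)-i$ with multiplicity $\binom{\ell}{i}-\binom{\ell}{i-1}$. By the symmetry $R\mapsto [\ell]\setminus R$, which maps $J(\ell,r)$ isomorphically onto $J(\ell,\ell-r)$ and leaves the claimed formula invariant, we may assume $r\le \ell/2$.

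\textbf{Step 1: a filtration by inclusion maps.} For $i\le r$, define $W_i\colon \R^{\binom{[\ell]}{i}}\to\R^{\binom{[\ell]}{r}}$ by $(W_i f)(R)=\sum_{T\subseteq R,|T|=i} f(T)$, and let $V_i=\mathrm{im}(W_i)$. Then $V_0\subseteq V_1\subseteq\cdots\subseteq V_r$, since $W_{i-1}$ factors through $W_i$ up to a scalar. The standard (Gottlieb--Kantor) full-rank fact gives that $W_i$ is injective when $i\le r\le \ell-i$, so $\dim V_i=\binom{\ell}{i}$. Setting $U_i=V_i\cap V_{i-1}^{\perp}$ gives an orthogonal decomposition $\R^{\binom{[\ell]}{r}}=\bigoplus_i U_i$ with $\dim U_i=\binom{\ell}{i}-\binom{\ell}{i-1}$. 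Since $A$ is symmetric, it remains to show that $A$ preserves each $V_i$ and acts on $V_i$ modulo $V_{i-1}$ as the claimed scalar. Then $A$ restricted to $U_i$ is that scalar.

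\textbf{Step 2: computing $A$ on $W_i f$.} Fix an $i$-set $T$ and let $\chi_T$ be the indicator of the $r$-sets containing $T$. We compute $A\chi_T(R)$, the number of neighbours $R'$ of $R$ with $T\subseteq R'$.
\begin{itemize}
\item If $T\subseteq R$, then $R'$ must drop an element of $R\setminus T$ and may add any outside element. This gives $(r-i)(\ell-r)$ neighbours.
\item If $|T\setminus R|=1$, then $R'$ must add the missing element and may drop any element of $R\setminus T$. This gives $r-i+1$ neighbours.
\item Otherwise the count is $0$.
\end{itemize}
Hence
\[
A\chi_T=(r-i)(\ell-r)\chi_T+\sum_{R:\,|T\setminus R|=1}(r-i+1)\,\mathbf 1_R .
\]

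We rewrite the second term using sets $T'$ of size $i-1$. The indicator of $\{R: |T\cap R|=i-1\}$ equals $\sum_{T'\subset T,|T'|=i-1}\chi_{T'}$ minus $i\chi_T$. The correction $i\chi_T$ arises because a set $R\supseteq T$ contains all $i$ of the $(i-1)$-subsets of $T$; sets with $|T\cap R|<i-1$ contribute nothing. Therefore
\[
A\chi_T\equiv\big[(r-i)(\ell-r)-i(r-i+1)\big]\chi_T \pmod{V_{i-1}}.
\]
The bracket expands to $(r-i)(\ell-r-i)-i$, which is exactly the claimed value.

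\textbf{Main obstacle.} The delicate step is the linear independence in Step 1, which is needed to get the multiplicities exactly. I would cite the standard result that the inclusion matrix of $i$-sets versus $r$-sets has full row rank when $i\le\min(r,\ell-r)$. Alternatively, the eigenvalues for distinct $i$ are distinct: they are strictly decreasing in $i$, since the difference between consecutive values is $\ell+1-2i>0$ in the relevant range. The $U_i$ are then genuine eigenspaces, and their dimensions can be checked by the trace identity $\sum_i \dim U_i=\binom{\ell}{r}$.
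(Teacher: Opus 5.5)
The paper does not prove this statement; it imports it from \cite{Brouwer1989}, where it comes out of the general theory of distance-regular graphs and the Johnson association scheme. Your filtration argument is a correct, self-contained alternative. The following steps all check out:
\begin{itemize}
\item the reduction $M=A/(r(\ell-r))$ and the complementation symmetry;
\item the nesting $V_{i-1}\subseteq V_i$, via $W_{i-1}=\frac{1}{r-i+1}W_iN_{i-1}$, where $N_{i-1}$ is the inclusion map from $(i-1)$-sets to $i$-sets;
\item the neighbour counts $(r-i)(\ell-r)$ and $r-i+1$;
\item the identity $\sum_{T'\subset T,|T'|=i-1}\chi_{T'}=i\chi_T+\mathbf{1}[\,|T\cap R|=i-1\,]$;
\item the deduction that $A$ acts as $\theta_i:=(r-i)(\ell-r-i)-i$ on $U_i=V_i\cap V_{i-1}^{\perp}$, using symmetry of $A$ to get $AU_i\subseteq U_i$.
\end{itemize}
Compared with the citation, this shows transparently where each eigenvalue comes from. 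The price is that every multiplicity rests on the rank statement $\dim V_i=\binom{\ell}{i}$.

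Two small fixes are needed. First, to conclude that $\theta_i$ has multiplicity exactly $\dim U_i$, you need the $\theta_i$ to be pairwise distinct, so that remark belongs in the main argument rather than in the fallback. Second, the consecutive gap is $\theta_i-\theta_{i+1}=\ell-2i$, not $\ell+1-2i$; it is still positive for $i<r\le \ell/2$.

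The fallback in your last paragraph does not work. Since $\dim U_i=\dim V_i-\dim V_{i-1}$, the identity $\sum_i\dim U_i=\dim V_r=\binom{\ell}{r}$ holds by telescoping whatever the individual $\dim V_i$ are. It therefore cannot certify any single multiplicity. Distinctness tells you the $U_i$ are the eigenspaces, not how large they are. So the Gottlieb--Kantor input is genuinely needed, and citing it is fine.

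If you want that step self-contained, let $N_j$ be the inclusion map from $j$-sets to $(j+1)$-sets. A direct count gives $N_j^{T}N_j=N_{j-1}N_{j-1}^{T}+(\ell-2j)I$. The first term on the right is positive semidefinite, so $N_j$ is injective whenever $2j<\ell$. Since $W_i=\frac{1}{(r-i)!}N_{r-1}\cdots N_i$, each $W_i$ is then injective when $r\le \ell/2$.
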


As a corollary, we then get
\begin{corollary}
\label{cor:spectra}
    Suppose that $2 \leq r \leq \ell - 1$. Let $M$ denote the transition matrix for the random walk on $J(\ell,r)$ and let $\lambda_1, \dots, \lambda_N$ and $u_1, \dots, u_N$ for $N = \binom{\ell}{r}$ denote its eigenvalues and associated eigenvectors, respectively, with $|\lambda_1| \geq |\lambda_2| \geq \dots$.  Then we have $\lambda_1 = 1$, $u_1 = \vec{1}$, and $|\lambda_2| \leq 1 - \frac{1}{r}$.
\end{corollary}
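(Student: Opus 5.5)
We read everything off \Cref{thm:spectra}. Write $\mu_i = \frac{(r-i)(\ell-r-i)-i}{r(\ell-r)}$ for $i=0,1,\dots,\min(r,\ell-r)$.

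\textbf{Top eigenpair.} For $i=0$ the formula gives $\mu_0 = 1$, with multiplicity $\binom{\ell}{0}=1$. Because $M$ is a stochastic matrix (each row averages over the $r(\ell-r)$ neighbors), $M\vec{1}=\vec{1}$. Hence $\vec 1$ spans the eigenspace of $\mu_0=1$, and we may take $u_1=\vec{1}$. Every eigenvalue of a stochastic matrix satisfies $|\lambda|\le 1$, so $\lambda_1=1$ is indeed the largest in absolute value. The hypothesis $r \le \ell-1$ ensures $\ell-r\ge 1$, so the denominator is nonzero and the graph has edges.

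\textbf{Reduction.} It remains to show $|\mu_i| \le 1-\frac1r$ for every $i\ge1$. Any eigenvalue other than $\lambda_1$ (including further copies of the value $1$, of which there are none) is some $\mu_i$ with $i\ge 1$, so this gives $|\lambda_2|\le 1-\frac1r$.

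\textbf{Upper bound.} Expanding the numerator,
\[
(r-i)(\ell-r-i)-i = r(\ell-r) - i\ell + i^2 - i ,
\]
so
\[
\mu_i = 1 - \frac{i(\ell - i + 1)}{r(\ell-r)} .
\]
For $1\le i\le \min(r,\ell-r)$ the function $i(\ell-i+1)$ is concave in $i$. On this range it is therefore minimized at an endpoint:
\begin{itemize}
\item at $i=1$ its value is $\ell$;
\item at $i=\min(r,\ell-r)=m$ its value is $m(\ell-m+1)\ge \ell$, since $1\le m\le \ell-1$.
\end{itemize}
Hence
\[
\mu_i \le 1-\frac{\ell}{r(\ell-r)} \le 1-\frac1r ,
\]
where the last step uses $\ell\ge \ell-r$.

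\textbf{Lower bound.} We need $\mu_i \ge -(1-\frac1r)$, i.e.\ $i(\ell-i+1) \le r(\ell-r)(2-\frac1r)$. I expect this to be the only delicate step, since it is where $r\ge2$ enters. The key observation is that $i\le\min(r,\ell-r)$, so both factors are controlled:
\[
i(\ell-i+1) = i(\ell-i) + i .
\]
\begin{itemize}
\item $i(\ell-i)\le r(\ell-r)$: the map $t\mapsto t(\ell-t)$ is increasing for $t\le \ell/2$ and $i\le\min(r,\ell-r)\le \ell/2$, so $i(\ell-i)\le m(\ell-m)=r(\ell-r)$.
\item $i\le \min(r,\ell-r)\le r(\ell-r)(1-\frac1r)=(r-1)(\ell-r)$: here $r\ge2$ gives $r-1\ge1$, and then $(r-1)(\ell-r)\ge \ell-r$.
\end{itemize}
Adding the two bounds gives $i(\ell-i+1)\le r(\ell-r)(2-\frac1r)$. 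Therefore $\mu_i\ge -1+\frac1r$, which together with the upper bound gives $|\mu_i|\le 1-\frac1r$ for all $i\ge1$.
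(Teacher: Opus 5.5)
Your proof is correct and follows the paper's argument: read the spectrum off \Cref{thm:spectra}, rewrite each eigenvalue as $1-\frac{i(\ell+1-i)}{r(\ell-r)}$, and show that the maximum over $i\ge 1$ is attained at $i=1$, giving $1-\frac{\ell}{r(\ell-r)}\le 1-\frac1r$. The only difference is the lower bound: the paper uses that $i(\ell+1-i)$ is increasing on $1\le i\le\min(r,\ell-r)$, so the smallest eigenvalue is exactly $-1/\max(r,\ell-r)$, whose absolute value is at most $\frac1r\le 1-\frac1r$ when $r\ge 2$; your split $i(\ell-i)+i$ reaches the same conclusion by a slightly longer route.
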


\begin{proof}
    By \Cref{thm:spectra}, we have that $M$ has eigenvalue $1$ with multiplicity $1$. On the other hand, we can note that
    \[\frac{(r-i)(\ell-r-i) - i}{r(\ell-r)} = 1 - \frac{i(\ell+1-i)}{r(\ell-r)} \]
    Thus, when $i \not = 0$, this is maximized when $i = 1$ at
    \[1 - \frac{\ell}{r(\ell-r)} \leq 1 - \frac{1}{r}. \]
    The quantity is minimized when $i = \min(r,\ell-r)$ at
        \[-\frac{1}{\max(r,\ell-r)}\]
    Thus, $|\lambda_2| \leq 1 - \frac{1}{r}$. Finally, we can verify that since $J(\ell,r)$ is regular, we have $M \vec{1} = \vec{1}$. This implies that $u_1 = \vec{1}$, as desired.
\end{proof}

\subsection{Chebychev Polynomials}
We will use \emph{Chebychev polynomials} as another key tool to prove our bounds on the approximate degree of width-$k$ CNFs.
Let \(T_d\) be the degree-\(d\) Chebyshev polynomial, characterized by
\(T_d(\cos t)=\cos(dt)\) and \(T_d(\cosh t)=\cosh(dt)\); see
\cite[Chapters~1--2]{Rivlin1990Chebyshev}.
The properties we need from the shifted and scaled Chebyshev polynomial are standard:

\begin{lemma}[Shifted and scaled Chebyshev polynomial]\label{lem:simple-chebyshev}
Let \(0<\eta\le1\), \(0<\eps<1\), and let
\(d\ge\ln(2/\eps)/\sqrt\eta\) be an integer.  Then the real
polynomial 
\[
  f(\xi)= \sum_{j=0}^d c_j \xi^j := 
  \frac{T_d\!\left((2\xi+\eta)/(2-\eta)\right)}
       {T_d\!\left((2+\eta)/(2-\eta)\right)}.
\]
has degree $d$ and satisfies
\begin{equation}
    \label{eq:f-properties}
  f(1)=1,
  \qquad
  |f(\xi)|\le\eps\quad\text{for all }-1\le\xi\le1-\eta,
  \qquad
  \text{and~}\sum_{j=0}^d |c_j| \leq 8^d.
\end{equation}
\end{lemma}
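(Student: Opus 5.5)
The proof checks the three properties of \eqref{eq:f-properties} one at a time. Write $a := (2+\eta)/(2-\eta)$ and let $\phi(\xi) := (2\xi+\eta)/(2-\eta)$ be the affine map inside $T_d$.

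\textbf{Degree and $f(1)=1$.} Since $T_d$ has degree exactly $d$ and $\phi$ is a nonconstant affine map, $f$ has degree $d$, provided the denominator $T_d(a)$ is nonzero. Because $a>1$, we have $T_d(a)=\cosh(d\,\arccosh a)\ge 1$, so the denominator is fine. Also $\phi(1)=a$, which gives $f(1)=1$ directly.

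\textbf{The bound $|f|\le\eps$ on $[-1,1-\eta]$.} The map $\phi$ is increasing, with $\phi(-1)=(\eta-2)/(2-\eta)=-1$ and $\phi(1-\eta)=(2-\eta)/(2-\eta)=1$. So $\phi$ sends $[-1,1-\eta]$ onto $[-1,1]$, where $|T_d|\le 1$. It therefore suffices to show $T_d(a)\ge 2/\eps$. Write $a=\cosh t$. Then
\[
T_d(a)=\cosh(dt)\ge \tfrac12 e^{dt},
\]
so we need $dt\ge \ln(4/\eps)$ up to constants. The work is in lower-bounding $t$. We have
\[
\cosh t = 1+\frac{2\eta}{2-\eta}\ge 1+\eta,
\]
and using $\cosh t\le 1+t^2/2\cdot\cosh t$, or more directly $\cosh t \le e^{t^2/2}$, this gives $t\ge\sqrt{2\ln(1+\eta)}\ge \sqrt{\eta}$ for $\eta\le1$ (since $\ln(1+\eta)\ge \eta/2$ on $[0,1]$). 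Hence $T_d(a)\ge \tfrac12 e^{d\sqrt\eta}$. The hypothesis $d\ge \ln(2/\eps)/\sqrt\eta$ then gives $\tfrac12 e^{d\sqrt\eta}\ge 1/\eps$. This only gives $|f|\le\eps$ if I am slightly more careful: I will instead use
\[
T_d(a)=\tfrac12\bigl[(a+\sqrt{a^2-1})^d+(a-\sqrt{a^2-1})^d\bigr]\ge \tfrac12 e^{dt},
\]
and try to sharpen the bound on $t$ to $t\ge \sqrt{2\eta}$, say via $\cosh t\le 1+\tfrac{t^2}{2}e^{t}$ with a small case check. That would make $e^{dt}\ge (2/\eps)^{\sqrt2}\ge 2/\eps$ whenever $\eps<1$ in the relevant range. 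This constant bookkeeping is the only delicate step. If it fails, I would note that the constant in the hypothesis can absorb it, since downstream uses only need $d=O(\ln(1/\eps)/\sqrt\eta)$.

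\textbf{Coefficient bound.} Write $T_d(y)=\sum_k b_k y^k$. It is standard that $\sum_k|b_k| = |T_d(i)|$-type bounds hold, or more simply that $\sum_k |b_k|\le (1+\sqrt2)^d\le 3^d$; this follows from the recurrence $T_{d+1}=2yT_d-T_{d-1}$, since the coefficient $\ell_1$-norms $s_d$ satisfy $s_{d+1}\le 2s_d+s_{d-1}$. Substituting $y=\alpha\xi+\beta$ with $\alpha=2/(2-\eta)\le 2$ and $|\beta|\le1$, each monomial $y^k$ expands to something with coefficient $\ell_1$-norm at most $(|\alpha|+|\beta|)^k\le 3^k$. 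This gives a total of at most $\sum_k |b_k|3^k\le (3(1+\sqrt2))^d\le 8^d$. Finally, dividing by $T_d(a)\ge1$ only decreases the coefficients, which completes the argument.
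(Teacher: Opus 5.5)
This is the paper's proof: the affine map sends $[-1,1-\eta]$ onto $[-1,1]$ and $1$ to $a=(2+\eta)/(2-\eta)$, you get $T_d(a)\ge\frac12 e^{d\sqrt\eta}$ from $\arccosh(1+\eta)\ge\sqrt\eta$, and the coefficient bound uses the same recurrence $L_{d+1}\le 2L_d+L_{d-1}$, a $3^d$ substitution factor, and division by $T_d(a)\ge 1$. You also justify $\arccosh(1+\eta)\ge\sqrt\eta$ via $\cosh t\le e^{t^2/2}$ and $\ln(1+\eta)\ge\eta/2$, which the paper only asserts.

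There is one misstep: you do not need $T_d(a)\ge 2/\eps$. Since $|f(\xi)|\le 1/T_d(a)$ on $[-1,1-\eta]$, the bound $T_d(a)\ge\frac12 e^{d\sqrt\eta}\ge 1/\eps$, which you already derived, closes the argument exactly as in the paper. So delete the ``sharpening'' to $t\ge\sqrt{2\eta}$. Above all, delete the fallback of enlarging the constant in the hypothesis on $d$, because that would prove a different lemma.
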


\begin{proof}
The map $\xi \mapsto (2\xi+\eta)/(2-\eta)$ in the numerator of $f(\xi)$ sends \([-1,1-\eta]\) onto \([-1,1]\),
where \(|T_d|\le1\), and sends \(1\) to
\((2+\eta)/(2-\eta)\ge1+\eta\).  
The polynomial \(T_d\) is increasing on
\([1,\infty)\), and the elementary estimate
\(\arccosh(1+\eta)\ge\sqrt\eta\) holds for \(0<\eta\le1\).  Hence
\[
  T_d\!\left(\frac{2+\eta}{2-\eta}\right)
  \ge T_d(1+\eta)
  =\cosh\!\bigl(d\,\arccosh(1+\eta)\bigr)
  \ge\tfrac12e^{d\sqrt\eta}
  \ge1/\eps,
\]
by our choice of $d$. This establishes the first two parts of \Cref{eq:f-properties}. 
For the third part (the coefficient bound), 
the Chebyshev polynomial recurrence 
\(T_{d+1}(z)=2zT_d(z)-T_{d-1}(z)\) gives
\[
  L_{d+1}\le2L_d+L_{d-1},
\]
where $L_d$ is the sum of absolute values of the coefficients of $T_d$.
Since the positive characteristic root of the above linear recurrence is \(1+\sqrt2\), induction starting from
\(L_0=L_1=1\) gives \(L_d\le(1+\sqrt2)^d\).  Substituting the affine form
\((2\xi+\eta)/(2-\eta)\), whose two coefficient magnitudes sum to at most
three, multiplies this bound by at most \(3^d\).  The denominator is at least
one, so the resulting coefficient mass of $f$ is at most
\((3(1+\sqrt2))^d<8^d\).
\end{proof}

\subsection{Perceptron}

Recall that the Perceptron Algorithm is an online learning algorithm that maintains a vector $w \in \R^m$ (initially $0^m$) of real weights over the feature space.  Given an input example $x \in \R^m$, the Perceptron Algorithm's hypothesis is $h(x) = \sign(w \cdot x) \in \bits$; if the true label $y \in \bits$ disagrees with $h(x)$, then $w$ is updated via the update rule $w \leftarrow w + y x.$

The Perceptron Convergence Theorem gives a mistake bound when the Perceptron Algorithm is run on a sequence of examples that are linearly separable (using a halfspace that passes through the origin) with a margin:

\begin{lemma} [Perceptron Convergence Theorem, \cite{Novikoff:62}]
\label{lem:perceptron}
Suppose that every labeled example $(x,y)$ that is given to the Perceptron Algorithm satisfies both $\|x\| \leq R$ and $0 < \rho \leq y(w^* \cdot x)$ for some vector $w^*$. Then
the Perceptron Algorithm makes at most \(R^2\|w^*\|^2/\rho^2\) mistakes. \end{lemma}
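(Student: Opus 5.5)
The plan is the classical Novikoff potential argument, which tracks the weight vector $w_t$ after $t$ mistakes along two quantities. Since the algorithm changes $w$ only on a mistake, it suffices to analyse the sequence of mistake updates. Let $w_0 = 0^m$, and let $w_t$ denote the weight vector after the $t$-th mistake. That mistake occurs on an example $(x_t,y_t)$ and gives $w_t = w_{t-1} + y_t x_t$. We may assume $w^* \neq 0$; otherwise $y(w^*\cdot x) = 0 < \rho$, contradicting the hypothesis.

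\textbf{Step 1 (progress along $w^*$).} Taking inner products with $w^*$ and using the margin hypothesis gives
\[
w^*\cdot w_t = w^*\cdot w_{t-1} + y_t (w^*\cdot x_t) \ge w^*\cdot w_{t-1} + \rho .
\]
By induction from $w_0=0$, this yields $w^*\cdot w_t \ge t\rho$.

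\textbf{Step 2 (controlled growth of $\|w_t\|$).} Expanding the square gives
\[
\|w_t\|^2 = \|w_{t-1}\|^2 + 2y_t (w_{t-1}\cdot x_t) + \|x_t\|^2 .
\]
The $t$-th update happens only because the hypothesis erred on $x_t$, i.e.\ $\sign(w_{t-1}\cdot x_t)\neq y_t$. One small point is the convention $\sign(0)$. Whichever value is chosen, an error means $y_t(w_{t-1}\cdot x_t)\le 0$: if the product were positive the prediction would be correct. Hence the cross term is nonpositive. Together with $\|x_t\|\le R$ this gives $\|w_t\|^2 \le \|w_{t-1}\|^2 + R^2$, and so by induction $\|w_t\|^2\le tR^2$.

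\textbf{Step 3 (combine).} By Cauchy--Schwarz,
\[
t\rho \le w^*\cdot w_t \le \|w^*\|\,\|w_t\| \le \|w^*\| R\sqrt{t}.
\]
Therefore $\sqrt t \le R\|w^*\|/\rho$, i.e.\ $t \le R^2\|w^*\|^2/\rho^2$. This holds for every $t$ up to the total number of mistakes, so the total number of mistakes is at most $R^2\|w^*\|^2/\rho^2$.

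None of the steps is a real obstacle; the argument is routine. The only point needing care is the sign convention in Step 2: we must verify that an error always gives $y_t(w_{t-1}\cdot x_t)\le 0$, including the tie case $w_{t-1}\cdot x_t=0$, and this holds under any convention for $\sign(0)$.
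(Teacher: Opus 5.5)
Your proof is correct, and it is the classical Novikoff argument: progress $w^*\cdot w_t \ge t\rho$, norm growth $\|w_t\|^2 \le tR^2$ from the nonpositive cross term, then Cauchy--Schwarz. The paper gives no proof of its own and simply cites Novikoff's 1962 result, which is proved exactly this way; your handling of the tie case $w_{t-1}\cdot x_t = 0$ is also right, since a mistake forces $y_t(w_{t-1}\cdot x_t)\le 0$ whichever value the hypothesis outputs at $0$.
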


Note that \Cref{lem:perceptron} has no dependence on the ambient dimension $m$; this will be important for us when we present our full result in \Cref{sec:real}.


\section{Approximate Degree of Width-$k$ CNFs} 
\label{sec:approx-degree}

We start by giving a bound on the approximate degree of width-$k$ CNFs. (By Boolean duality, the same bound holds for the approximate degree of width-$k$ DNFs.)  In particular, our main goal in this section will be to prove the following theorem:

\begin{theorem}
\label{thm:apx-degree}
For any CNF $g$ with clauses of width at most $k$ and $\eps \in (0,0.1)$, there exists a polynomial 
    $q$ of degree $d := O(\ell^{k/k+1} \log^{2/k+1}(1/\eps))$ such that
    if $g(x) = 1$ then $q(x) = 1$ and for all $x \in \zo^\ell$ we have that $|q(x) - g(x)| \leq \eps$. Moreover, the total weight of the coefficients of $q$ is at most $32^d$.
\end{theorem}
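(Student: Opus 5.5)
The plan is to realize the Johnson-graph walk from \Cref{sec:tech-overview-apx-degree} as an explicit polynomial and to amplify it with \Cref{lem:simple-chebyshev}. Fix $x\in\zo^\ell$ and a parameter $r$ with $2k\le r\le \ell-1$, to be chosen at the end. Let $M$ be the transition matrix of $J(\ell,r)$; it is symmetric, since the graph is regular. Call a vertex $R$ \emph{marked} if $R$ contains all variables of some clause of $g$ that $x$ falsifies. Let $D_x$ be the diagonal $0/1$ matrix of unmarked vertices, put $A_x:=D_xMD_x$, and let $N=\binom{\ell}{r}$. Let $f(\xi)=\sum_{j\le d_f}c_j\xi^j$ be the polynomial of \Cref{lem:simple-chebyshev} with parameters $\eta,\eps$ chosen below. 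Define
\[
q(x)\;:=\;\frac1N\,\vec 1^{\,T} D_x\, f(A_x)\, D_x \vec 1 \;=\;\sum_{j=0}^{d_f} c_j\,\frac1N\,\vec 1^{\,T}D_x A_x^{\,j} D_x\vec 1 .
\]

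\emph{Degree and the case $g(x)=1$.} The entry $D_x(R,R)$ depends only on $x_R$. Hence the term of $\frac1N\vec 1^{\,T}D_xA_x^jD_x\vec 1$ for a walk $R_0,\dots,R_j$ equals (walk probability)$\,\cdot\prod_t D_x(R_t,R_t)$, a $0/1$ function of $x_S$ with $S=\bigcup_t R_t$ and $|S|\le r+j$. After multilinearization, $q$ therefore has degree at most $r+d_f$. If $g(x)=1$, nothing is marked, so $D_x=I$. Then $q(x)=\frac1N\vec 1^{\,T}f(M)\vec 1=f(1)=1$ by \Cref{cor:spectra} ($M\vec 1=\vec 1$) and \Cref{lem:simple-chebyshev}.

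\emph{Weight.} Write a $0/1$ function $h$ of $x_S$ as $\sum_{a}h(a)\prod_{i\in S}(\text{$x_i$ or $1-x_i$})$. Each product has weight at most $2^{|S|}$, and there are $2^{|S|}$ assignments $a$, so $h$ has weight at most $4^{|S|}$. The walk probabilities sum to at most $1$, so the $j$-th term contributes weight at most $|c_j|4^{r+j}$. Summing over $j$ gives total weight at most $4^{r+d_f}\sum_j|c_j|\le 4^{d}8^{d}=32^d$.

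\emph{The case $g(x)=0$: the main step.} This is a spectral bound on $A_x$. Fix a falsified clause $C$ with $|C|\le k$. The fraction of marked vertices is at least
\[
\eps_0=\binom{\ell-k}{r-k}\Big/\binom{\ell}{r}\ \ge\ \Big(\frac{r-k}{\ell}\Big)^k\ \ge\ \Big(\frac{r}{2\ell}\Big)^k .
\]
Now take a unit vector $v$ supported on the unmarked set $U$. Decompose $v=\alpha\vec 1/\sqrt N+w$ with $w\perp\vec 1$. By \Cref{cor:spectra},
\[
v^TMv\le\alpha^2+(1-\tfrac1r)\|w\|^2=1-\tfrac1r\|w\|^2 .
\]
Cauchy--Schwarz gives $\alpha^2=\langle v,\vec 1\rangle^2/N\le |U|/N\le 1-\eps_0$, so $\|w\|^2\ge\eps_0$. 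Hence every eigenvalue of the symmetric matrix $A_x$ lies in $[-1,1-\eta]$ with $\eta:=\eps_0/r$. On the range of $D_x$ this makes $\|f(A_x)\|\le\eps$, and so $|q(x)|\le\frac1N\|D_x\vec 1\|^2\eps\le\eps$. I expect this step to be the crux, because it is exactly where the Chebyshev amplification turns the $r/\eps_0$ naive walk length into $\sqrt{r/\eps_0}$.

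\emph{Parameters.} Take $d_f=\lceil \ln(2/\eps)\sqrt{r/\eps_0}\rceil\le O\big(\log(1/\eps)\sqrt r\,(2\ell/r)^{k/2}\big)$. Balancing $r$ against this gives $r^{k+1}\approx \ell^k\log^2(1/\eps)$ up to $2^{O(k)}$-type constants, which are absorbed after taking $(k+1)$-th roots. So we set $r=\Theta\big(\ell^{k/(k+1)}\log^{2/(k+1)}(1/\eps)\big)$ and obtain $d=r+d_f=O\big(\ell^{k/(k+1)}\log^{2/(k+1)}(1/\eps)\big)$.

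\emph{Edge cases.} If this $r$ would exceed $\ell-1$, or fall below $2k$, the claimed degree is at least of order $\ell$ (resp.\ $k$), and the exact multilinear representation of $g$ suffices. When $g$ depends on $m$ variables, that representation has degree $m\le d$ (using $m\le \ell$ in the first case and $m\le k\le d$ when there is a single clause), and by the same $4^{m}$ weight estimate its weight is at most $32^d$. Finally, one should check carefully that the multilinearization of the walk products preserves both the degree and the weight estimates, which it does since $x_i^2=x_i$ on $\zo^\ell$ can only merge terms.
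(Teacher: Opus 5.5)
Your proposal is correct and follows essentially the same route as the paper. It uses the same Johnson-graph walk with projection onto unmarked vertices, the same Cauchy--Schwarz/spectral-gap bound $\|\Pi_{\overline S}M\Pi_{\overline S}\|\le 1-\eps_0/r$, Chebyshev amplification with $\eta=\eps_0/r$, the same $4^{r+m}\cdot 8^m$ weight count, and the same balancing of $r$ against $\sqrt{r/\eps_0}$. The extra outer $D_x$ factors in your $q$ only change the constant term and are harmless. One small cleanup: in the edge case $r<2k$, your ``single clause'' remark does not cover general $g$. Instead, note that $r<2k$ forces $\ell=O(k)$, so the claimed degree is already $\Omega(\ell)$ and the exact representation suffices.
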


This substantially improves on the dependence on $k$ in the best prior result, due to Sherstov \cite{Sherstov20}, which was a degree bound of $O(\sqrt{2}^k \ell^{k/k+1} \log^{1/k+1}(1/\eps))$; the cost of this improvement is a slightly worse dependence on $\eps$, but the difference will not matter for us. Crucially, the above will allow us to get non-trivial bounds on the approximate degree of CNFs of width $O(\log(\ell))$, which will roughly be what we can enforce by a random restriction.

As mentioned in the technical overview, we will prove the theorem by considering random walks on the Johnson graph $J(\ell,r)$ (where as we will see we have $r>k$; the parameter $r$ will be set later). Such a random walk consists of first sampling a random subset $\bR_0$ of $r$ variables and then getting each $\bR_{t+1}$ from $\bR_t$ by sampling $\bR_{t+1}$ as a random set of size $r$ with $|\bR_{t+1} \setminus \bR_t| = 1$. Throughout this process, we would like to output $0$ if at some point in time we find a set $\bR_t$ such that $x|_{\bR_{t}}$ falsifies some clause of $g$ and thereby ``certifies'' that $g(x) = 0$. If we find no such set, we would like to output $1$. 

Given these goals, we define
\[
S:=S(x) = \cbra{\text{all sets $R$ in $\binom{[\ell]}{r}$  such that $x_R$ falsifies some clause of $g$}}.
\]
Clearly $S = \emptyset$ if $g(x) =1$. On the other hand if $g(x) = 0$, then there is at least one falsified clause, so the fraction of all vertices that belong to $S$ is
\begin{equation} \label{eq:epszero}
    \frac{|S|}{\binom{\ell}{r}} \geq \frac{\binom{\ell-k}{r-k}}{\binom{\ell}{r}} =  \prod_{i=0}^{k-1} \frac{r-i}{\ell-i} \geq \left( \frac{r-k}{\ell} \right)^k =: \eps_0.
\end{equation}

The naive approach would now be to argue that after walking for not too many steps, we expect to reach a point in $S$. Since the second eigenvalue of the Johnson graph has magnitude roughly $1-1/r$  (cf. \Cref{cor:spectra}), we would naively expect to take roughly $\Omega(r \eps_0^{-1})$ steps to reach $S$.
However, a simple computation  reveals that $\Omega(r \eps_0^{-1}) \geq \ell$ whenever $k > 2$, so this approach reads all variables and we should not expect to get a non-trivial bound from it. 
In the rest of this section we will see that we can do better
by applying an appropriate Chebychev polynomial (cf. \Cref{lem:simple-chebyshev}) to ``amplify'' the spectrum and get a ``square root'' savings that allows us to detect whether $|S| \geq \eps_0$ or is empty using walks of length $O \left(\sqrt{r\eps_0^{-1}} \right)$. 

More formally, our polynomial will be parameterized by numbers $r$ and $m$ to be set later. We then let  $M_r \in \mathbb{R}^{\binom{[\ell]}{r} \times \binom{[\ell]}{r}}$ denote the transition matrix of our random walk on the Johnson graph, and we let $\Pi_{\overline{S}}: \mathbb{R}^{\binom{[\ell]}{r} \times \binom{[\ell]}{r}}$ denote the orthogonal projection onto $\overline{S}$, the complement of $S$. Let $P_m$ denote the degree-$m$ Chebyshev polynomial with $\eta = \eps_0/r$ from \Cref{lem:simple-chebyshev}. Finally, let $v$ be the unit vector $v := \sqrt{\frac{1}{\binom{\ell}{r}}} \cdot \vec{1} \in \R^{\binom{[\ell]}{r}}$ and define our desired polynomial as
    \[q(x) := v^T P_m \left(\Pi_{\overline{S}} M_r \Pi_{\overline{S}} \right) v. \]
(Note that the above is a function of $x$ as $\Pi_{\overline{S}}$ depends on $x$.) 

We start by arguing that $q(x)$ is a polynomial with the desired coefficient bound.

\begin{lemma}
\label{lem:q-is-poly}
$q(x)$ is a polynomial of degree at most $r + m$. Moreover, the sum of the absolute values of the coefficients of $q$ is at most $32^{r + m}$.
\end{lemma}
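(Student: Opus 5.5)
Expand $P_m(\xi)=\sum_{j=0}^m c_j\xi^j$ as in \Cref{lem:simple-chebyshev}, so that
\[
q(x)=\sum_{j=0}^m c_j\, v^T\bigl(\Pi_{\overline S} M_r \Pi_{\overline S}\bigr)^j v .
\]
Since $\Pi_{\overline S}$ is idempotent, for $j\ge1$ the $j$-th term is $v^T\Pi_{\overline S}(M_r\Pi_{\overline S})^j v$. The $j=0$ term is the constant $c_0$, because $v^Tv=1$. I will write each term entrywise:
\[
v^T\Pi_{\overline S}(M_r\Pi_{\overline S})^j v=\frac{1}{\binom{\ell}{r}}\sum_{R_0,\dots,R_j} \prod_{t=0}^{j-1}(M_r)_{R_tR_{t+1}}\prod_{t=0}^{j}\mathbf{1}[R_t\notin S(x)].
\]
Here the sum runs over walks $R_0,\dots,R_j$ in $J(\ell,r)$. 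The only dependence on $x$ is through the indicators $\mathbf 1[R\notin S(x)]$.

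The key observation is that $\mathbf 1[R\notin S(x)]$ is a function of $x_R$ alone, namely the indicator that $x_R$ falsifies no clause of $g$ whose variables all lie in $R$. It is therefore a multilinear polynomial in the $r$ variables of $R$. Along a walk, $\bigcup_{t\le j} R_t$ has size at most $r+j$, because each step adds one new element. The product $\prod_t \mathbf 1[R_t\notin S(x)]$ is a function of $x$ restricted to this union. After multilinearizing (using $x_i^2=x_i$ on $\zo^\ell$), it is a polynomial of degree at most $r+j\le r+m$. This proves the degree bound.

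For the weight bound, note that the product of indicators along the walk is a $\{0,1\}$-valued function of at most $r+j$ Boolean variables. The sum of absolute values of the coefficients of its multilinear representation is at most $4^{r+j}$. To see this, write a Boolean function $h$ on $u$ variables as $\sum_a h(a)\prod_i(\text{$x_i$ or $1-x_i$})$. Each product term has coefficient weight at most $2^u$, and there are $2^u$ of them.

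The walk weights $\prod_t (M_r)_{R_tR_{t+1}}$, summed over all walks and averaged over $R_0$, total exactly $1$, since $M_r$ is stochastic. Hence the weight of the $j$-th term is at most $4^{r+j}$, and
\[
\text{weight}(q)\le\sum_{j=0}^m |c_j|\,4^{r+m}\le 8^m\cdot 4^{r+m}\le 32^{r+m},
\]
using the coefficient bound $\sum_j|c_j|\le 8^m$ from \Cref{lem:simple-chebyshev}.

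The main point needing care is the weight accounting: bounding each walk's indicator product by its Boolean-function weight, rather than multiplying per-set weights, which would blow up as $4^{r(j+1)}$.
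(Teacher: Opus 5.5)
Your proposal is correct and follows essentially the same route as the paper. Like the paper, you expand $P_m$ into a sum over Johnson-graph walks, bound each walk's indicator product as a Boolean function of at most $r+j$ variables with multilinear weight at most $4^{r+j}$, and bound the total walk mass using $\sum_j |c_j| \le 8^m$. The only cosmetic difference is that you get $v^T M_r^j v = 1$ exactly from stochasticity of $M_r$, whereas the paper gets $v^T M_r^j v \le 1$ from the operator-norm bound $\|M_r\| \le 1$.
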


\begin{proof}
    We start by noting that $\Pi_{\overline{S}}$ is a diagonal matrix with $\left(\Pi_{\overline{S}}\right)_{R,R} = \Indicator_{\overline{S}}(R)$ for all $R \in \binom{[\ell]}{r}$, where $\Indicator_{\overline{S}}$ is the indicator function for the set $\overline{S}$. We can now expand $q(x)$ as
        \[q(x) =  c_0 + \binom{\ell}{r}^{-1} \cdot \left(\sum_{t = 1}^m c_t \sum_{R_0, R_1, ..., R_t} \prod_{i = 0}^{t} \Indicator_{\overline{S}}(R_i) \cdot \prod_{i = 0}^{t-1} (M_r)_{R_i, R_{i+1}} \right)  \]
    where $c_t$ is the coefficient of $\xi^t$ in $P_m(\xi)$. As only walks $R_0, \dots, R_t$ in the Johnson graph contribute to the product above we have that $\prod_{i=0}^t \Indicator_{\overline{S}}(R_i)$ is simply a $\{0,1\}$-valued function on $r + t$ variables, i.e. those that appear in the union of the sets along the walk. As such, $\prod_{i=0}^t \Indicator_{\overline{S}}(R_i)$ can be written as a polynomial of degree at most $r + m$ with integer coefficients of total magnitude at most $4^{r+m}$.
    
    For the overall bound on the weight of the coefficients, we note that it can be upper bounded by 
        \[4^{r+m} \cdot \binom{\ell}{r}^{-1} \cdot \left( |c_0| + \sum_{t = 1}^m  |c_t| \sum_{R_0, R_1, ..., R_t} \prod_{i = 0}^{t-1} (M_r)_{R_i, R_{i+1}} \right) \leq 4^{r+m} \cdot \sum_{t = 0}^m |c_t| \cdot v^T (M_r)^t v \leq 32^{r+m} \]
    where the inequality follows from the bound on the coefficients $c_t$ in \Cref{lem:simple-chebyshev} and the fact that $M_r$ has operator norm at most $1$ by \Cref{cor:spectra}.
\end{proof}

With \Cref{lem:q-is-poly} in hand, we now turn to prove \Cref{thm:apx-degree}.

\begin{proof}[Proof of \Cref{thm:apx-degree}]
    We will assume throughout that $k = O(\log(\ell))$ as otherwise the degree bound is at most $\Omega(\ell)$ and we can exactly write $g$ as a polynomial of degree $\ell$. We set $r: = 2 \ell^{k/(k+1)} \log^{2/(k+1)}(1/\eps)$ and $m: = \log(2/\eps) \cdot \sqrt{r \eps_0^{-1}}$. We then have that by \Cref{lem:q-is-poly}, $q(x)$ has degree at most
    \begin{align*}
        r + m
        &= 2 \ell^{k/(k+1)} \log^{2/(k+1)}(1/\eps) + \log(2/\eps) \cdot \sqrt{r \eps_0^{-1}} \\
        &= 2 \ell^{k/(k+1)} \log^{2/(k+1)}(1/\eps) + \log(2/\eps) \cdot \sqrt{r \cdot \left( \frac{\ell}{r-k} \right)^k } \\
        &\leq 2 \ell^{k/(k+1)} \log^{2/(k+1)}(1/\eps) + \log(2/\eps) \cdot \sqrt{r \cdot \left( \frac{2\ell}{r} \right)^k } \\
        &\leq 2 \ell^{k/(k+1)} \log^{2/(k+1)}(1/\eps) + 2\sqrt{2} \cdot \ell^{k/(k+1)} \cdot  \log^{2/(k+1)}(1/\eps)
    \end{align*}
    and satisfies the claimed weight bound.
    
    It remains to show that $q(x)$ pointwise approximates $g$ to at most an additive $\pm \eps$ error. To start, note that if $g(x) = 1$, then $S = \emptyset$ and $\Pi_{\overline{S}} = I$. Using \Cref{cor:spectra}, we then conclude
        \[q(x) = v^T P_m(M_r) v = P_m(v^T M_r v) = P_m(1) = 1.\]

    On the other hand, if $g(x) = 0$, then by \Cref{eq:epszero} we have that $\frac{|S|}{\binom{\ell}{r}} \geq \eps_0$. We now wish to bound the spectral norm of $\Pi_{\overline{S}} M_r \Pi_{\overline{S}}$. To do this, consider an arbitrary unit vector $z$: we will show that the quadratic form $z^T \Pi_{\overline{S}} M_r \Pi_{\overline{S}} z$ is bounded. We start by writing
        \[\Pi_{\overline{S}} z = u^\parallel + u^\perp \]
    where $u^\parallel$ denotes the component of $\Pi_{\overline{S}} z$ along $v$ and $u^\perp$ denotes the orthogonal component that lies in $v^\perp$. Note that
    \[\|u^\parallel\| = |v^T \Pi_{\overline{S}} z| \leq \sqrt{\frac{|\overline{S}|}{\binom{\ell}{r}}} \cdot \|z\| \leq \sqrt{1 - \eps_0} \]
by Cauchy Schwartz and the fact that
    \[(\Pi_S v)_R = \sqrt{\frac{1}{\binom{\ell}{r}}} \cdot \Indicator_{\overline{S}}(R).\]

    We can now compute 
    \begin{align*}
      z^T \Pi_{\overline{S}} M_r \Pi_{\overline{S}} z &= (u^\parallel + u^\perp) M_r (u^\parallel + u^\perp) \\
      &= u^\parallel M_r u^\parallel + u^\perp M_r u^\perp \\
      &\leq \|u^\parallel\|^2 + (1 - r^{-1}) \cdot \|u^{\perp}\|^2 \tag{by \Cref{cor:spectra}}\\
      &\leq \|\Pi_{\overline{S}} z\|^2 - r^{-1} \|u^\perp\|^2 \\
      &\leq \|\Pi_{\overline{S}} z\|^2 - r^{-1} (\|\Pi_{\overline{S}} z\|^2 - 1 + \eps_0) \\
      &\leq 1 - r^{-1}\eps_0.
    \end{align*}
    Hence $\Pi_{\overline{S}} M_r \Pi_{\overline{S}}$ has spectral norm at most $1 - r^{-1}\eps_0$. This implies that
        \[P_m(\Pi_{\overline{S}} M_r \Pi_{\overline{S}})\]
    has spectral norm at most $\eps$ by our choice of $m$. Thus, 
    \[|q(x)| = \left| v^T P_m( \Pi_{\overline{S}} M_r \Pi_{\overline{S}} ) v \right| \leq \eps \|v\|_2^2 = \eps\]
    as desired, and the proof is complete.
\end{proof}


\section{Warm-up: Learning in Time $2^{n \left(1 - 2^{-O(\gamma)} \right)}$} \label{sec:warmup}

Recall that $s$ is the size of the depth-$3$ target $\mathsf{Or} \circ \mathsf{And} \circ \mathsf{Or} $ circuit, where
  $s=n^\gamma$ and $\gamma$ is a sufficiently large constant.
We use $F$ to denote the unknown Boolean function computed by the circuit, with 
  $F=\lor_{i\le s} g_i$ being the disjunction of CNFs $g_i$, and we 
  use $\calC$ to denote the collection of at most $s$ distinct clauses across all of the $g_i$'s.

\subsection{Subcubes and width-reducing sets} \label{sec:subcubes-warmup}

In both this and the next sections, like the \cite{ST17itcs} algorithm our  online learning algorithm starts by randomly drawing 
  a set of \emph{live} variables $\bL\subset [n]$.
Looking ahead, $\bL$ will be a uniformly random size-$\ell$ subset of $[n]$,
  with $\ell\approx pn$ for some parameter $p$ that we will choose later. (We will see that the choices of $p$ for this section and the next section are quite different.)

For now, consider a fixed set $L$ of live variables of size $\ell$. 
We use it to divide 
$\{0,1\}^n$ into $2^{n-\ell}$ \emph{subcubes}: Given any assignment $\smash{z\in \{0,1\}^{[n]\setminus L}}$, the $z$-subcube, denoted $z \circ \zo^{L}$, contains all $x\in \{0,1\}^n$ with $x_{[n]\setminus L}=z$.
We refer to $z$ as the \emph{address} of the subcube.
We write $F|_{L,z}$ to denote the restriction of $F$ to the $z$-subcube:
$F|_{L,z}(y)=F(y\circ z)$ for any $y\in \{0,1\}^{L}$.
Any clause $C\in \calC$, on the other hand, is either \emph{satisfied}, \emph{falsified} or \emph{surviving} 
  in the $z$-subcube:
\begin{flushleft}\begin{enumerate}
\item $C$ is satisfied if it has a literal already satisfied by $z$;
\item $C$ is falsified if all of its literals are already falsified by $z$; or
\item $C$ is surviving otherwise, meaning that every fixed literal of $C$ in $[n]\setminus L$ is set to be false by $z$ and at least one literal of $C$ lies in the live set $L$.
\end{enumerate}\end{flushleft}
For any surviving $C$ in the $z$-subcube, we write $C|_{L,z}$ to denote the disjunction of its literals in $L$.

Let $k$ be a positive integer parameter to be specified later.
Given $L$ and an address $\smash{z\in \{0,1\}^{[n]\setminus L}}$, we now define the \emph{width-reducing set} $Y(L,z)\subseteq L$. As mentioned in \Cref{sec:tech-overview}, this is called a ``width-reducing set'' because fixing the variables in $Y(L,z)$ in any possible way causes the width of each restricted CNF $g_i|_{L,z}$ to become at most $k$. It is constructed in the following simple greedy fashion:
\begin{flushleft}\begin{enumerate}
\item Start by setting $Y=Y(L,z)$ to be the empty set;
\item Repeat the following: If there remains any surviving clause $C\in \calC$ such that $C|_{L,z}$ contains more than $k$ literals outside of $Y$, pick any such clause $C$ (e.g., by fixing an arbitrary ordering of clauses of $\calC$ and choosing the first such clause) and add all variables in $C|_{L,z}$ outside of $Y$ to $Y$.
\end{enumerate}\end{flushleft}

\noindent The crucial property of $Y(L,z)$, which is evident from how it is constructed, is this:
\begin{equation}
\label{eq:crucialYproperty}
\text{Every $C\in \calC$ that survives in the $z$-subcube has at most $k$ literals outside of $Y(L,z)$.}
\end{equation}

\def\ALG{\textsc{Alg}}

\subsection{Overview of the algorithm and its analysis} \label{sec:warmup-overview}

We will use the following three parameters for the online algorithm \ALGWARMUP in this section:
\begin{equation}
k:=\left\lceil c_0\log n\right\rceil,
\quad
p:=2^{-2\gamma/c_0}, 
\quad \text{and}\quad
\ell:=\lfloor pn \rfloor,
\label{eq:warmup-parameters}
\end{equation}
where $c_0<1$ denotes a sufficiently small positive constant,
  independent of $\gamma$, such that $2^{-1/c_0}$ is sufficiently small compared to the constant hidden in the degree upper bound of  \Cref{thm:apx-degree}.
\ALGWARMUP starts by drawing a uniformly random, size-$\ell$ subset $\bL \subset [n]$ as the set
  of live variables.

In \Cref{sec:tailbound-warmup}, we show that with high probability over the randomness of $\bL$, 
  most of the $2^{n-\ell}$ many $z$-subcubes $z \circ \zo^{\bL}$ satisfy $|Y(\bL,z)|\le \ell/100$. 
(We will refer to such a $z$ as a \emph{tame} address; an address $z$ which is not tame is said to be \emph{hard}.)
This is done by proving a tail bound on the probability of $|Y(\bL,\bz)|>\ell/100$, when the address $\bz$ is 
  drawn uniformly at random from $\smash{\{0,1\}^{[n] \setminus \bL}}$.
Next in \Cref{sec:tameaddress}, we use \Cref{thm:apx-degree} to show~that for every tame address $z$, both the number of mistakes and the
counterexample-processing work of running Perceptron on $F|_{\bL,z}$
  can be explicitly bounded.
Of course, the algorithm \ALGWARMUP cannot tell in advance whether a given address $z$ is tame or hard.
Instead, after drawing $\bL$, it  runs a separate copy of Perceptron on $F|_{\bL,z}$ over each of the $\smash{2^{n-\ell}}$ many $z$-subcubes $z \circ \zo^{\bL}$.
For each $z \in \zo^{[n]\setminus \bL}$, if
 the number of mistakes made by Perceptron on $F|_{\bL,z}$ exceeds the explicit mistake bound for tame addresses, then $\ALGWARMUP$ switches to 
 a brute-force memorization approach to learn $F|_{\bL,z}$ over $\zo^{\bL}.$
We analyze the overall performance of $\ALGWARMUP$ in \Cref{sec:analysis}.

\subsection{Tail bound on the size of the width-reducing set} \label{sec:tailbound-warmup}

For the analysis in this subsection, we consider drawing $\bL$ by including each variable independently with probability $p$, and then drawing a $\smash{\bz\sim\{0,1\}^{[n] \setminus \bL}}$.
The width-reducing set $Y(\bL,\bz)$ is constructed in the same greedy fashion as described earlier, even though
  $\bL$ is not necessarily of size exactly $\ell$.

We prove the following lemma:

\begin{lemma}\label{lem:tailbound}
The probability of $|Y(\bL,\bz)|>\ell/{100}$ is at most $2^{-\ell/5}$.
\end{lemma}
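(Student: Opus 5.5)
The plan is to analyze the greedy construction of $Y(\bL,\bz)$ by the principle of deferred decisions, and then take a union bound over the possible sequences of clauses the greedy procedure picks. Each variable $i\in[n]$ is independently live with probability $p$; otherwise it is fixed to $0$ or to $1$, each with probability $(1-p)/2$. I would run the greedy procedure while revealing the status of a variable only when it belongs to a clause the procedure is examining.

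The key structural observation concerns a clause $C_j$ picked at step $j$. Every variable of $C_j$ that appeared in an earlier picked clause $C_1,\dots,C_{j-1}$ either already lies in $Y$ or was revealed to be non-live. Hence the more than $k$ live literals of $C_j|_{\bL,\bz}$ outside $Y$ are all \emph{fresh}, i.e.\ they belong to variables that have not been revealed before. In addition, every fresh non-live variable of $C_j$ must be fixed so as to falsify its literal, since $C_j$ survives.

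Conditioned on the history, the probability that a given clause with $w$ fresh variables has exactly $j\ge k+1$ fresh live variables and all other fresh variables falsifying is
\[
\binom{w}{j}\,p^j\left(\tfrac{1-p}{2}\right)^{w-j}\le \binom{w}{j}p^j 2^{-(w-j)}\le (c_1 p)^j
\]
for an absolute constant $c_1$. Here I would use $\binom{w}{j}2^{-(w-j)}\le 2^{j}\binom{w}{j}2^{-w}\le 2^j$ (the last step because $\binom{w}{j}\le 2^w$), which gives $c_1=2$.

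Next I would bound the bad event. If $|Y(\bL,\bz)|>\ell/100$, then there exist $t$ and a sequence of distinct clauses $C_1,\dots,C_t\in\calC$ with fresh-live counts $j_1,\dots,j_t\ge k+1$ such that the first time the running sum $\sum_i j_i$ exceeds $\ell/100$ is at $i=t$ (note $|Y|=\sum_i j_i$). I would truncate the sequence at that first time. Multiplying the conditional bounds gives, for each fixed sequence of clauses and counts, probability at most $\prod_i (c_1p)^{j_i}$. I would then sum over the at most $s^t$ clause sequences, and over the count vectors, of which there are at most $2^{J}$ compositions of each total $J=\sum_i j_i$. Using $s\le n^\gamma=2^{\gamma\log n}\le 2^{(\gamma/c_0) j_i}$ (since $j_i>k\ge c_0\log n$) and $p=2^{-2\gamma/c_0}$, each factor satisfies
\[
s\cdot 2\,(c_1p)^{j_i}\le 2^{(\gamma/c_0 + 1+\log c_1 - 2\gamma/c_0)j_i}\le 2^{-(\gamma/(2c_0))j_i},
\]
using that $\gamma\ge1$ and $c_0$ is small. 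Summing over all $J\ge \ell/100$, the total probability is at most $2^{-(\gamma/(2c_0))\,\ell/100}\cdot O(1)$, which is at most $2^{-\ell/5}$ once $c_0\le 1/50$ or so.

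The step I expect to be the main obstacle is making the deferred-decision conditioning rigorous. Two points need care. First, the greedy procedure's choice of clause depends on revealed information, so the union bound must range over adaptively chosen sequences; I would handle this by fixing the sequence in advance and noting that the event "the greedy procedure picks exactly this sequence with these counts" implies the product event over disjoint sets of fresh variables, so the probabilities multiply. Second, a single clause of huge width could add far more than $\ell/100$ variables at once; this is harmless, because the per-clause bound $(c_1p)^{j}$ already decays in $j$ and no truncation within a clause is needed.
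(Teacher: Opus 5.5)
Your proposal is correct and is essentially the paper's argument: you fix the clause sequence in advance, use the disjointness of the fresh sets $\Delta_j$ to get independence, bound each per-clause factor by $(cp)^{\sigma_j}$, and take a union bound over the $s^m$ sequences and the count vectors. The differences are minor: your $\binom{w}{j}\le 2^w$ step giving $(2p)^j$ replaces the paper's generating-function bound giving $(3p)^{\sigma_j}$, and you count compositions where the paper multiplies by $2^{20(\sum_j\sigma_j-\ell/100)}\ge 1$. One small slip: to absorb the $2^J$ composition count, the per-clause factor should read $s\cdot 2^{j_i}(c_1p)^{j_i}$ rather than $s\cdot 2\,(c_1p)^{j_i}$, which is what your exponent $\gamma/c_0+1+\log c_1-2\gamma/c_0$ already computes.
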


Before proving \Cref{lem:tailbound}, note that given that $\bL$ has size exactly $\ell$ with probability $\Omega(1/n)$, we have from \Cref{lem:tailbound} that when
  $\bL$ is a uniformly random size-$\ell$ subset of $[n]$ and $\smash{\bz\sim \{0,1\}^{[n] \setminus \bL}}$,
  the probability of $|Y(\bL,\bz)|>\ell/100$ is at most $O(n)\cdot 2^{-\ell/5}$. 
We then have the following corollary by Markov's inequality, which we will use later and is the main take-away from this subsection:

\begin{corollary}\label{simplecoro1a}
Let $\bL$ be a uniformly random size-$\ell$ subset of $[n]$.
With probability at least $1-1/n$, the number of $\smash{z\in \{0,1\}^{[n] \setminus \bL}}$ with
  $|Y(\bL,z)|>\ell/100$ is at most $2^{n-\ell}\cdot O(n^2)\cdot 2^{-\ell/5}\le 2^{n-\ell-\ell/6}$.
\end{corollary}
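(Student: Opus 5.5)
The corollary follows from \Cref{lem:tailbound} in two steps. First we transfer the bound from the binomial model of $\bL$ to the uniform size-$\ell$ model by conditioning. Then we apply Markov's inequality over the choice of $\bL$.

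\textbf{Step 1 (conditioning).} Let $E$ denote the event $|Y(\bL,\bz)|>\ell/100$, and let $\bL'$ be drawn by including each coordinate independently with probability $p$. The conditional distribution of $\bL'$ given $|\bL'|=\ell$ is exactly uniform over size-$\ell$ subsets. Conditioned on $\bL'$, the address $\bz$ is uniform on $\zo^{[n]\setminus\bL'}$, which matches the process in the corollary. Since $\ell=\lfloor pn\rfloor$ is a mode of the binomial distribution $\mathrm{Bin}(n,p)$, we have $\Pr[|\bL'|=\ell]\ge 1/(n+1)$. Therefore
\[
\Pr_{\bL,\bz}[E]=\Pr_{\bL',\bz}\bigl[E \,\big|\, |\bL'|=\ell\bigr]\le \frac{\Pr_{\bL',\bz}[E]}{\Pr[|\bL'|=\ell]}\le (n+1)\,2^{-\ell/5},
\]
where the last inequality uses \Cref{lem:tailbound}.

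\textbf{Step 2 (Markov).} For a fixed $L$, define $X(L)$ to be the fraction of addresses $z\in\zo^{[n]\setminus L}$ with $|Y(L,z)|>\ell/100$. Then $\E_{\bL}[X(\bL)]=\Pr_{\bL,\bz}[E]\le (n+1)2^{-\ell/5}$. By Markov's inequality, with probability at least $1-1/n$ over $\bL$ we have $X(\bL)\le n(n+1)2^{-\ell/5}=O(n^2)2^{-\ell/5}$. The number of such hard addresses is therefore at most $2^{n-\ell}\cdot O(n^2)\cdot 2^{-\ell/5}$.

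\textbf{Step 3 (absorbing the polynomial factor).} It remains to check that $O(n^2)2^{-\ell/5}\le 2^{-\ell/6}$. This is equivalent to $O(n^2)\le 2^{\ell/30}$. Here $\ell=\lfloor pn\rfloor$, and $p=2^{-2\gamma/c_0}$ is a constant independent of $n$, so $\ell=\Theta(n)$ and the inequality holds for all sufficiently large $n$.

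None of these steps is a real obstacle. The only points needing care are the lower bound $\Pr[|\bL'|=\ell]=\Omega(1/n)$, which holds because $\ell$ is the binomial mode, and the implicit assumption that $n$ is large enough relative to $\gamma$ for Step 3.
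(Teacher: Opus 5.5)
Your proof is correct and follows the paper's argument: you condition the independent-$p$ model on $|\bL'|=\ell$ (an event of probability $\Omega(1/n)$) to transfer \Cref{lem:tailbound} to the uniform size-$\ell$ model, then apply Markov's inequality over $\bL$ and absorb the $O(n^2)$ factor using $\ell=\Theta(n)$. One small correction: $\lfloor pn\rfloor$ need not be a mode of $\mathrm{Bin}(n,p)$, since the mode is $\lfloor (n+1)p\rfloor$ and this can equal $\lfloor pn\rfloor+1$; however, the ratio of consecutive probabilities at $\lfloor pn\rfloor$ is at most $1+1/(n(1-p))$ (indeed $\Pr[|\bL'|=\ell]=\Theta(1/\sqrt{n})$), so the bound $\Pr[|\bL'|=\ell]=\Omega(1/n)$ that you actually need still holds.
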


\def\bsigma{\boldsymbol{\sigma}}

We prove \Cref{lem:tailbound} in the rest of this subsection.
To this end, consider any fixed nonempty sequence $A=$ $(C_1,\ldots,C_m)$ of clauses from $\calC$ that can be chosen by the greedy algorithm (in that order) to build $Y(\bL,\bz)$. 
Let
\[
    \Delta_j:=\operatorname{vars}(C_j)
       \setminus\bigcup_{i<j}\operatorname{vars}(C_i),
\]
and let $\bsigma_j$ denote the number of live variables in $\Delta_j$, i.e.~$\bsigma_j := |\bL \cap \Delta_j|$.
For $A$ to be the sequence of clauses chosen by the greedy algorithm to build $Y(\bL,\bz)$, a necessary condition is that for every $j\in [m]$, (1) $|\Delta_j|\ge k+1$; (2) $\bsigma_j \geq k+1$, and 
(3) every variable in $\Delta_j$ that is not in $\bL$ must receive the value in $\bz$ that falsifies~its literal~in $C_j$.
Letting $\beta=(1-p)/2$,
the probability that $\bsigma_j$ takes a particular value $\sigma_j$ is given by
$$
{|\Delta_j| \choose \sigma_j}
p^{\sigma_j} \beta^{|\Delta_j|-\sigma_j}
\le \frac{2}{1+p} \left(\frac{2p}{1+p}\right)^{\sigma_j}\le (3p)^{\sigma_j},
$$
where in the first inequality we used
$$
{|\Delta_j| \choose \sigma_j} \beta^{|\Delta_j|-\sigma_j}
\le \sum_{i\ge \sigma_j} {i\choose \sigma_j} \beta^{i-\sigma_j}
= (1-\beta)^{-\sigma_j-1}=\left(\frac{2}{1+p}\right)^{\sigma_j+1}.
$$
By the disjointness of $\Delta_1,\Delta_2,\dots$ we have that $\bsigma_1,\bsigma_2,\dots$ are independent, and hence
for each fixed nonempty sequence $A=(C_1,\ldots,C_m)$, we have 
\begin{align*}
&\Pr_{\bL,\bz}\big[A\text{ is chosen and }|Y(\bL,\bz)|>\ell/100\big] \\
&\qquad\le \sum_{\substack{\sigma_1,\ldots,\sigma_m\ge k+1\\ \sigma_1+\cdots+\sigma_m\ge \ell/100}} \left(\prod_{j=1}^m \left(3p\right)^{\sigma_j}\right) \tag{union bound}\\
&\qquad\le 2^{-20(\ell/100)}\cdot \sum_{\sigma_1,\ldots,\sigma_m\ge k+1} \left(\prod_{j=1}^m \left(2^{20}\cdot 3p\right)^{\sigma_j}\right)\\
&\qquad= 2^{-\ell/5} \prod_{j=1}^m \left(\sum_{\sigma_j\ge k+1} \left(2^{20}\cdot 3p\right)^{\sigma_j}\right)
\le 2^{-\ell/5} \left(2\cdot \left(2^{20}\cdot 3p\right)^{k+1}\right)^m,
\end{align*}
where we used in the last inequality that $\gamma$ is sufficiently large so $p$ is sufficiently small and hence the sum in the last line is dominated by a geometric series.

Given that there are at most $s^m$ possible sequences $A$ of length $m$, we have
$$
\Pr_{\bL,\bz}\big[|Y(\bL,\bz)|>\ell/100\big]\le 2^{-\ell/5} \sum_{m\ge 1} s^m \left(2\cdot \left(2^{20} \cdot 3p\right)^{k+1}\right)^m\le 2^{-\ell/5},
$$
using that $\gamma$ is sufficiently large so that $2s\cdot (2^{20}\cdot 3p)^{k+1}\le 1/2$
(recalling \Cref{eq:warmup-parameters} and the fact that $s=n^\gamma$; note that here we are using the choice of $p = \exp(-\Theta(\gamma)).$ 
This finishes the proof of \Cref{lem:tailbound}.

\subsection{Perceptron on tame subcubes}\label{sec:tameaddress}

Fix a live set $L$ of size $\ell$ and any tame address $z\in \{0,1\}^{[n]\setminus L}$
  with $|Y(L,z)|\le \ell/100$.
We first use \Cref{thm:apx-degree} to show that there is a low-degree polynomial $Q_z$ that sign-represents
  $F|_{L,z}$ with a margin as follows.
  
\begin{lemma} \label{lem:PTF-with-margin}
There is a polynomial $Q_z$ such that $\deg(Q_z)\le \ell/90$, such that the sum of the squares of the coefficients of $Q_z$ are at most $2^{\ell/20 + o(\ell)}$
and 
$$
(-1)^{F|_{L,z}(x)+1}\cdot Q_z(x)\ge \frac{1}{4}\quad \text{for all $x\in \{0,1\}^L$.}
$$
\end{lemma}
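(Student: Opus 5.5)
Fix $L$ and a tame address $z$, and write $Y=Y(L,z)$ with $|Y|\le \ell/100$ and $W=L\setminus Y$. The plan is to build $Q_z$ in three stages: condition on the values of the variables in $Y$, approximate each surviving CNF on $W$ with \Cref{thm:apx-degree}, and combine the approximators into a sign-representation of the outer OR.

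\textbf{Stage 1: conditioning on $Y$.} For each assignment $a\in\zo^{Y}$ and each $i$, consider $g_i|_{L,z}$ with the $Y$-variables set to $a$. Some clauses are falsified by $z$, and then $g_i|_{L,z}\equiv 0$; we simply drop such $i$. Otherwise every surviving clause, after also fixing $Y$, is either satisfied, falsified, or a clause over $W$ of width at most $k$, by \Cref{eq:crucialYproperty}. So $g_{i,a}:=g_i|_{L,z,Y\leftarrow a}$ is a width-$k$ CNF (or a constant) over the at most $\ell$ variables of $W$.

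\textbf{Stage 2: approximating each CNF.} Apply \Cref{thm:apx-degree} to each $g_{i,a}$ with error $\eps:=1/(4s)$. This gives a polynomial $q_{i,a}$ of degree
\[
d=O\bigl(\ell^{k/(k+1)}\log^{2/(k+1)}(4s)\bigr)
\]
and coefficient weight at most $32^d$. With $k=\lceil c_0\log n\rceil$ we have $\ell^{1/(k+1)}\ge 2^{\log(pn)/(c_0\log n)}\approx 2^{1/c_0}p^{1/(c_0\log n)}$. Also $\log^{2/(k+1)}(4s)=O(1)$. So $d\le O(1)\cdot 2^{-1/c_0}\ell$, and choosing $c_0$ small makes this at most $\ell/1000$; this is exactly how $c_0$ was chosen.

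Next set $P_a:=\sum_i q_{i,a}-\tfrac12$. If $F|_{L,z}(a,w)=1$, some $g_{i,a}(w)=1$, so that $q_{i,a}=1$. Every other $q_{j,a}$ satisfies $q_{j,a}\ge g_{j,a}-\eps\ge-\eps$, hence $P_a\ge 1-s\eps-\tfrac12\ge\tfrac14$. If $F|_{L,z}=0$, every $g_{i,a}=0$, so $|q_{i,a}|\le\eps$ and $P_a\le s\eps-\tfrac12\le-\tfrac14$.

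\textbf{Stage 3: gluing over $Y$.} Define
\[
Q_z(x):=\sum_{a\in\zo^{Y}}\mathbf 1[x_Y=a]\,P_a(x_W),
\]
where $\mathbf 1[x_Y=a]=\prod_{j\in Y}(x_j\text{ or }1-x_j)$ has degree $|Y|$. On every input exactly one indicator equals $1$, so the sign and margin of $Q_z$ are inherited from $P_a$. Its degree is at most $|Y|+d\le\ell/100+\ell/1000\le\ell/90$.

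For the coefficient bound, each $P_a$ has $\ell_1$ weight at most $s\cdot 32^d+1$. Each indicator has $\ell_1$ weight $2^{|Y|}$, and there are $2^{|Y|}$ of them. Multiplying polynomials multiplies $\ell_1$ weights, so $\|Q_z\|_1\le 4^{|Y|}(s32^d+1)$. The sum of squared coefficients is at most $\|Q_z\|_1^2\le 16^{|Y|}\cdot s^2\cdot 32^{2d}\cdot O(1)$. The factor $16^{\ell/100}=2^{\ell/25}$ is within the target $2^{\ell/20}$. The factor $32^{2d}=2^{10d}$ is $2^{o(\ell)}$ provided $d\le\ell/1000$ up to a constant, which we get by taking $c_0$ small enough that $10d\le\ell/100$. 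Together these give $2^{\ell/25+\ell/100+o(\ell)}=2^{\ell/20+o(\ell)}$.

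\textbf{Main obstacle.} The delicate step is the degree computation in Stage 2: showing that $\ell^{k/(k+1)}$ is a small constant fraction of $\ell$ when $k=c_0\log n$ and $\ell=pn$ with $p=2^{-2\gamma/c_0}$. Since $\ell^{1/(k+1)}\to 2^{1/c_0}$ as $n\to\infty$, this needs $\log\ell$ to be close to $\log n$, meaning $\log(1/p)=o(\log n)$; that holds for constant $\gamma$. With that, $2^{-1/c_0}$ absorbs both the hidden constant of \Cref{thm:apx-degree} and the factor $10$ in the weight exponent.
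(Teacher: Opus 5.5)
Your proposal is correct and follows the paper's proof essentially step for step. Both arguments condition on $Y(L,z)$, apply \Cref{thm:apx-degree} with $\eps=1/(4s)$ to each restricted width-$k$ CNF, glue the pieces with the indicators $\mathbf{1}[x_Y=a]$, subtract $1/2$, and bound the squared weight through an $\ell_1$ weight of about $s\cdot 4^{|Y|}\cdot 32^d$. Summing over $i$ inside each $a$ rather than outside yields the identical polynomial.

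There are two small slips, neither of which affects the argument. First, the displayed inequality $\ell^{1/(k+1)}\ge 2^{\log(pn)/(c_0\log n)}$ points the wrong way; use $k+1\le c_0\log n+2$ instead, which gives a lower bound tending to $2^{1/c_0}$. Second, $2^{10d}$ is $2^{\ell/100}$, not $2^{o(\ell)}$; that is in fact what your final tally $\ell/25+\ell/100=\ell/20$ uses.
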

\begin{proof}
Let  $Y= Y(L,z)$ with $|Y|\le \ell/100$.
After the restriction by $(L,z)$, the function  $F|_{L,z}$ can be written as $\lor_{i\le s} h_i$, where 
  each $h_i$ is a CNF over live variables in $L$ such that 
  every clause has at most $k$ literals outside of $Y$. 
Fix an  $h_i$ and any assignment $w\in \{0,1\}^Y$. 
Then $h_i|_{Y,w}$ is a width-$k$ CNF over $L\setminus Y$.
Setting $\eps:=1/(4s)$ in \Cref{thm:apx-degree} gives a polynomial $\Phi_{i,w}$ 
  satisfying
$$
\deg\left(\Phi_{i,w}\right)\le d :=O\left(\ell^{k/(k+1)}\cdot \log^{2/(k+1)}(4s)\right),
\quad\text{and}\quad 
\big|h_{i}|_{Y,w}(x)-\Phi_{i,w}(x)\big|\le \eps$$ 
for all $x\in \{0,1\}^{L\setminus Y}$. 
Given that $s=n^\gamma$ and $k=\Omega(\log n)$, we always have $\log^{2/(k+1)}(4s)\le 2$
  when $n$ is asymptotically large.
On the other hand, we have
$$
\ell^{k/(k+1)}=\ell\cdot \ell^{-1/(k+1)}
\leq \ell \cdot \pbra{{\frac 2 {p n}}}^{1/(k+1)} \leq 2\ell \cdot n^{-1/(k+1)}.
$$
Making $c_0$ a sufficiently small constant to overcome the hidden constant in $d$,
  we have $d\le \ell/1000$.

Next let $\Phi_i$ be the following polynomial over variables in $L$:
$$
\Phi_i(x)=\sum_{w\in \{0,1\}^Y} \Indicator[x_Y=w]\cdot \Phi_{i,w}(x_{L\setminus Y}).
$$
We have $\deg(\Phi_i)\le |Y|+d\le (\ell/100)+(\ell/1000)\le \ell/90$. 
We also have $|\Phi_i(x)-h_i(x)|\le \eps$ for all $x\in \{0,1\}^L$.
Finally, let $Q_z$ be the following polynomial over variables in $L$:
$$
Q_z(x)=\sum_{i\le s} \Phi_i(x)-1/2.
$$
It is easy to verify that $\deg(Q_z)\le \ell/90$ and that $(-1)^{F|_{L,z}+1}\cdot Q_z(x)\ge 1/4$ for all $x$. To bound the weight of the coefficients, we note that $\Indicator[x_Y = w]$ can be written as a polynomial with weights at most $2^{|Y|}$. Using the weight bound from \Cref{thm:apx-degree}, we can then bound the sum of the squares of the coefficients of $Q_z$ by
\[\left( 1/2 + \sum_{i \leq s} \sum_{w\in \{0,1\}^Y} 2^{|Y|} \cdot 32^d \right)^2 = \left( 1/2 + s \cdot 4^{|Y|} \cdot 32^d \right)^2 \leq 2^{\ell/20 + o(\ell)}. \qedhere\]

\end{proof}

Since there are at most $2^{H(1/90)\ell}$ many multilinear monomials of degree at most $\ell/90$ over a space of $\ell$ variables, it follows from \Cref{lem:perceptron} that the Perceptron algorithm, run with target function $F|_{L,z}$ over subcube $z \circ \zo^{L}$ using the feature expansion with a feature $(-1)^{M}$ for each multilinear monomial $M$ of degree at most $\ell/90$, makes at most
$$
2^{H(1/90)\ell+O(\log \ell)}\cdot 2^{\ell/20 + o(\ell)} \le 2^{0.14\ell}
$$  
many mistakes, and each counterexample round has hypothesis update processing time at most $2^{0.14\ell}$ as well.

\subsection{Performance of the online algorithm}\label{sec:analysis}

Assume that the size-$\ell$ subset $\bL$ drawn at the beginning of the algorithm is such that the number of hard addresses is at most $2^{n-\ell-\ell/6}$, which happens with
  probability at least $1-1/n$ by \Cref{simplecoro1a}. 
Then the total number of mistakes
is at most
$$
2^{n-\ell}\cdot 2^{0.14\ell} + 2^{n-\ell-\ell/6}\cdot 2^\ell \le 2 \cdot 2^{n-\ell/6},
$$
and the total running time is bounded by
\[\overbrace{2^{0.14 \ell}}^{\text{hypothesis evaluation time + hypothesis update time}}\cdot \overbrace{2 \cdot 2^{n-\ell/6}}^{\text{mistake bound}} \leq 2^{n - \ell/50}.\]
Using our choice of $\ell$, then yields the desired bound of $2^{n(1-2^{-O(\gamma)})}$ on the runtime and completes the proof of the warm-up case of \Cref{thm:main}.


\section{Learning Depth $3$ Circuits in Time $2^{n \left(1 - \Omega(1/\gamma) \right)}$}
\label{sec:real}

\subsection{Overview of algorithm and its analysis} \label{sec:overview-real}

As in the previous section, our algorithm will restrict to a set $\bL$ of live variables , and for each address $z \in \zo^{[n] \setminus \bL}$ we will then run the Perceptron algorithm over the subcube $z \circ \zo^{\bL}$. That said, we will make two tweaks to get our stronger quantitative result:

\begin{flushleft}\begin{enumerate}
    \item First, we will apply a milder random restriction and handle CNFs of width at most
    \[k := \left\lceil \frac{\gamma}{2}\log(n)\right\rceil \]
    which entails setting
    \[
p := {\frac {1}{2^{10}}}\quad\text{and}\quad
\ell := \lfloor p n \rfloor.
\]  
  \item With this milder restriction, our degree bound, while non-trivial, will be too expensive for the Perceptron algorithm. In particular, suitably modifying \Cref{thm:apx-degree} can only yield approximate degree $d$ with $d \in [\ell/2, \ell]$  when the width of the CNFs may be as large as the value of $k$ given above, which is too high as there are $\Omega(2^\ell)$ monomials of degree at most $\ell/2$ over the $\ell$ variables in $\bL$. To circumvent this, we partition $\bL$ into disjoint ``blocks'' $\bL_1, \dots, \bL_B$, all of which have size at most $b$, where we will set
      \[b := \lceil 8 \gamma \rceil \quad\text{and} \quad B := \left \lceil \frac{\ell}{b} \right \rceil. \]
Let $\bL_U:= \cup_{i\in U} \bL_i$ for any $U\subseteq [B]$. We run the Perceptron algorithm over a set of features $\calF$ given by the feature map $\Phi$:
    \[\Phi_{\emptyset}(x)  = 1 \quad \text{and} \quad \Phi_{U,y}(x) = 1 \big[ x|_{\bL_U} = y \big] \text{~for each nonempty~}U \subseteq [B], y \in \zo^{\bL_U}.
    \]
We remark that the feature map $\Phi$ has more than $2^\ell$ coordinates (features), but it has $2^B$ non-zero entries, each with value 1, for any given $x$. The fact that $\|\Phi(x)\|_2^2 = 2^B$, which is much less than $2^\ell$, will allow us to efficiently run the Perceptron algorithm over these features.
\end{enumerate}\end{flushleft}

Note that throughout this section, we assume without loss of generality that we are working with depth-$3$ circuits of size $n^\gamma$, where $\gamma$ is a sufficiently large absolute constant. Since our bounds are asymptotic in $n$, we can and will also assume that $\gamma \leq \log(n)$ (we will use this later).

\subsection{Dictionary version of approximating polynomial}

We begin by reframing \Cref{thm:apx-degree} in terms of blocks of variables and setting our parameters for the large width setting. Similar to above, for a set of blocks $V_0, \dots, V_{B'}$, we will use the notation $V_U$ for $\bigcup_{i \in U} V_i$ for a set $U \subseteq [B']$. In \Cref{sec:linsep} we will explain how $V_0,\dots,V_{B'}$ relate to the sets $L_0,\dots,L_B$ described above. 
\begin{theorem}
\label{thm:block-apx-degree}
Suppose that 
$[\ell] = V_0 \sqcup V_1 \sqcup V_2 \sqcup \dots \sqcup V_{B'}$ and let $g$ be any CNF over $\zo^\ell$ whose clauses each depend on at most $k$ of the sets $V_1,\dots,V_{B'}$, with $k \in [\log(B'), \sqrt{B'}/4]$. For any $\eps \in (0,0.1]$ and $$\tau = B' - \frac{B' \log(B')}{4k} + 10 (B')^{0.9} \log\left(\frac{1}{\eps}\right),$$ there is a value $a_0$ and functions $h_{V_T \cup V_0}$ such that
    \[q(x) := a_0 + \sum_{\substack{T \subseteq [B']: \\ |T| \leq \tau}} h_{V_T \cup V_0}(x) .\]
Here $h_{V_T \cup V_0}$ are functions  that depends only on variables in $V_T \cup V_0$ and satisfy
\[ |a_0| + \sum_{\substack{T \subseteq [B']: \\ |T| \leq \tau}} \| h_{V_T \cup V_0}\|_{\infty} \leq \exp \left( 10 (B')^{0.9} \log(1/\eps) \right),\]
where $\|h_{V_T \cup V_0}\|_\infty$ denotes $\max |h_{V_T \cup V_0}(x)|$ over all $0/1$ assignments to the variables in $V_T \cup V_0$.
Moreover, we have that $g(x) = 1$ implies that $q(x) = 1$ and that $|g(x) - q(x)| \leq \eps$ for all $x \in \zo^\ell$.
\end{theorem}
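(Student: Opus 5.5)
The plan is to rerun the Johnson-graph construction from the proof of \Cref{thm:apx-degree}, with \emph{blocks} playing the role that single variables played there. The block $V_0$ is always treated as ``read''. Consider the walk on $J(B',r)$, whose vertices are $r$-subsets $R\subseteq[B']$ of block indices. For an input $x\in\zo^\ell$, let
\[
S(x)=\cbra{R\in\tbinom{[B']}{r} : x|_{V_R\cup V_0} \text{ falsifies some clause of } g}.
\]
Define, exactly as before,
\[
q(x)=v^T P_m\bigl(\Pi_{\overline S} M_r \Pi_{\overline S}\bigr) v ,
\]
where $P_m$ is the polynomial of \Cref{lem:simple-chebyshev} with $\eta=\eps_0/r$. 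Two properties carry over verbatim from the earlier proof. If $g(x)=1$ then $S=\emptyset$, so $q(x)=P_m(1)=1$. If $g(x)=0$, the quadratic-form argument together with \Cref{cor:spectra} (applied to $J(B',r)$, which needs $2\le r\le B'-1$) shows $\|\Pi_{\overline S}M_r\Pi_{\overline S}\|\le 1-\eps_0/r$, and hence $|q(x)|\le\eps$. The only needed change is in the marked fraction $\eps_0$. A falsified clause touches at most $k$ of the blocks $V_1,\dots,V_{B'}$, and any $R$ containing those block indices is marked. Therefore $|S|/\binom{B'}{r}\ge\binom{B'-k}{r-k}/\binom{B'}{r}\ge((r-k)/B')^k=:\eps_0$.

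Next comes the structural form and the weight bound. Expanding as in \Cref{lem:q-is-poly},
\[
q(x)=c_0+\binom{B'}{r}^{-1}\sum_{t=1}^m c_t\sum_{R_0,\dots,R_t}\prod_{i\le t}\Indicator_{\overline S}(R_i)\prod_{i<t}(M_r)_{R_i,R_{i+1}} .
\]
For each walk, the product of indicators is a $\{0,1\}$-valued function of the variables in $V_{T}\cup V_0$, where $T=\bigcup_i R_i$ and $|T|\le r+t\le r+m$. I will set $a_0=c_0$ and let $h_{V_T\cup V_0}$ be the sum of all walk terms whose union equals $T$. Each such function depends only on $V_T\cup V_0$, and no expansion into monomials is needed, which is the whole point of the block dictionary. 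Bounding sup norms by the triangle inequality with indicator products at most $1$ gives
\[
|a_0|+\sum_T\|h_{V_T\cup V_0}\|_\infty\le\sum_{t=0}^m|c_t|\, v^TM_r^tv\le\sum_t|c_t|\le 8^m .
\]
So it suffices to have $r+m\le\tau$ and $m\ln 8\le 10(B')^{0.9}\log(1/\eps)$.

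The main obstacle is the parameter choice. I will take $r=\lfloor B'-B'\log(B')/(4k)\rfloor$ and $m=\lceil\ln(2/\eps)\sqrt{r/\eps_0}\rceil$. Write $(r-k)/B'\ge 1-x$ with $x=\log(B')/(4k)+(k+1)/B'$. The hypothesis $k\ge\log B'$ gives $\log(B')/(4k)\le1/4$, and $k\le\sqrt{B'}/4$ makes $k(k+1)/B'$ at most a small constant. Using $1-x\ge e^{-x-x^2}$ for $x\le 1/2$, I get $\eps_0\ge\exp(-kx-kx^2)\ge C^{-1}(B')^{-c}$ for an absolute $c$ slightly above $1/4$. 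The $kx^2$ term is where care is needed; I would bound it by $x\cdot kx\le\frac12(\tfrac14\log B'+O(1))$, which keeps the total exponent $c$ below $1/2$. Then $m\le O\bigl((B')^{(1+c)/2}\log(1/\eps)\bigr)\le 10(B')^{0.9}\log(1/\eps)$, since $(1+c)/2<0.9$ for large $B'$. This yields both the degree bound $r+m\le\tau$ and the weight bound $8^m\le\exp(10(B')^{0.9}\log(1/\eps))$. I also need to check $2\le r\le B'-1$, which follows from $k\le\sqrt{B'}/4$ and $k\ge\log B'$.
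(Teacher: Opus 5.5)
Your proposal is correct and follows the paper's proof essentially step for step. It uses the same Johnson graph $J(B',r)$ on block indices with $V_0$ always read, the same $\eps_0=((r-k)/B')^k$, the same choices $r=\lfloor B'-B'\log(B')/(4k)\rfloor$ and $m=\lceil \ln(2/\eps)\sqrt{r/\eps_0}\rceil$, the same grouping of walk terms by their union $T$, and the same $8^m$ weight bound. The only differences are cosmetic: you bound $\eps_0$ via $1-x\ge e^{-x-x^2}$ instead of the paper's $1-x\ge e^{-2x}$ (either keeps the exponent of $B'$ in $m$ comfortably below $0.9$), and you explicitly check $2\le r\le B'-1$, which the paper leaves implicit.
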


We prove \Cref{thm:block-apx-degree} analogously to \Cref{thm:apx-degree}, but taking walks on sets of blocks, i.e. 
using the Johnson graph with vertex set $\binom{[B']}{r}$.

\begin{proof}
    We again use parameters $r$ and $m$. Similar to before, we will consider the Johnson graph over subsets of blocks. Let $M_r$ denote the transition matrix on the Johnson graph and let $S=S(x) \subseteq \binom{[B']}{r}$ denote the set of sets of blocks such that $x|_{V_S \sqcup V_0}$ falsifies some clause in $g$. As in \Cref{thm:apx-degree}, we  observe that
        \[{\frac {|S|}{\binom{B'}{r}}} \geq \frac{\binom{B'-k}{r-k}}{\binom{B'}{r}} \geq \left( \frac{r-k}{B'} \right)^k := \eps_0.\]
    We then set $r = \left \lfloor B' - B' \log(B')/(4k) \right \rfloor$. Taking $m = \left \lceil \log(2/\eps) \sqrt{r /\eps_0} \right \rceil$, we can observe that
    \begin{align}
    m&\leq 1 + \log(2/\eps) \sqrt{B' \left(1 - \log(B')/(4k) \right) \cdot \left( \frac{B'}{r-k} \right)^k} \nonumber \\
    &\leq 1 + \log(2/\eps) \sqrt{B' \cdot \left( \frac{1}{1 - \log(B')/(4k) - (k+1)/B'} \right)^k} \nonumber \\
    &\leq 1 + \log(2/\eps) \sqrt{B' e^{2k \left(\log(B')/(4k) + (k+1)/B' \right)}} \nonumber \\
    &\leq 2 \sqrt{e} \log(1/\eps) (B')^{0.9}  \label{eq:peanut}
    \end{align}
    where in the penultimate inequality we used the fact that $1 - \log(B')/(4k) - k/B' \geq \frac{1}{2}$ by our assumptions on $k$ and that $1 - x \geq e^{-2x}$ for $x \in [0,1/2]$.
    
    We now let $P_m$ denote the degree $m$ polynomial from \Cref{lem:simple-chebyshev} with $\eta = \eps_0/r$ and  $v = \binom{B'}{r}^{-1/2} \cdot \vec{1} \in \R^{{[B'] \choose r}}$, and we consider the expression
    \[q(x) := v^T P_m(\Pi_{\overline{S}} M_r \Pi_{\overline{S}}) v.\]

    Expanding this product, we get that
        \[q(x) = c_0 + \binom{B'}{r}^{-1} \cdot \left( \sum_{t = 1}^m c_t \sum_{R_0, R_1, ..., R_t} \prod_{i = 0}^{t} 1_{\overline{S}}(R_i) \cdot \prod_{i = 0}^{t-1} (M_r)_{R_i, R_{i+1}} \right) \] 
    where again $c_t$ is the coefficient of $\xi^t$ in $P_m(\xi)$. As before, only walks of length at most $m$ contribute to the sum so each product only depends on $V_0$ and at most $r + m \leq \tau$ other blocks. Thus, we can set
        \[h_{V_T \cup V_0}(x) = \binom{B'}{r}^{-1} \cdot \sum_{t = 1}^m c_t \sum_{\substack{R_0, R_1, ..., R_t:\\ \bigcup_{i=0}^t R_i = T}} \prod_{i = 0}^{t} 1_{\overline{S}}(R_i) \cdot \prod_{i = 0}^{t-1} (M_r)_{R_i, R_{i+1}}.\]
    and $a_0 = c_0$. We can then compute that
        \begin{align*}
             |a_0| + \sum_{\substack{T \subseteq [B']: \\ |T| \leq \tau}} \| h_{V_T \cup V_0}\|_{\infty} &\leq |c_0| + \binom{B'}{r}^{-1} \cdot \left( \sum_{\substack{T \subseteq [B']: \\ |T| \leq \tau}} \left \| \sum_{t = 0}^m c_t \sum_{\substack{R_0, R_1, ..., R_t:\\ \bigcup_{i=0}^t R_i = T}} \prod_{i = 0}^{t} 1_{\overline{S}}(R_i) \cdot \prod_{i = 0}^{t-1} (M_r)_{R_i, R_{i+1}} \right\|_\infty \right) \\
             &\leq |c_0| + \binom{B'}{r}^{-1} \cdot \left( \sum_{\substack{T \subseteq [B']: \\ |T| \leq \tau}} \sum_{t = 1}^m |c_t| \sum_{\substack{R_0, R_1, ..., R_t:\\ \bigcup_{i=0}^t R_i = T}} \prod_{i = 0}^{t-1} (M_r)_{R_i, R_{i+1}} \right) \\
             &= \left(\sum_{t = 0}^m |c_t| \cdot v^T (M_r)^t v \right) \\
             &\leq 8^m \leq \exp \left( 10 (B')^{0.9} \log(1/\eps) \right)
        \end{align*}
        where the final line used \Cref{lem:simple-chebyshev} and \Cref{eq:peanut}.

    By an identical argument as in the proof of \Cref{thm:apx-degree}, it also follows that if $g(x) = 1$ then $q(x) = 1$ and $|q(x) - g(x)| \leq \eps$ for all $x \in \zo^{\ell}$.
\end{proof}

\subsection{Moment bound on the size of the width-reducing set} \label{sec:momentbound-real}

As in the analysis of \Cref{sec:tailbound-warmup}, it will be convenient to first consider drawing $\bL \subseteq [n]$ by including each variable independently with probability $p$ (though recall that now $p=1/2^{10}$).  Given an outcome of $L$ and $z \in \zo^{[n]\setminus L}$, the width-reducing set $Y(L,z)$ is defined exactly as in \Cref{sec:subcubes-warmup}, 
so condition \eqref{eq:crucialYproperty} again holds for us:  every $C\in \calC$ that survives in the $z$-subcube has at most $k$
literals outside of $Y(L,z)$.

The key technical tool in this section is the following moment bound (note that in the following lemma $\bL$ is constructed as described above, by independently including each variable with probability $p=2^{-10}$):

\begin{lemma} \label{lem:momentbound}
$\E[2^{5|Y(\bL,\bz)|}] \leq 2.$
\end{lemma}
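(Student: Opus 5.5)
The plan is to reuse the union bound over greedy clause sequences from the proof of \Cref{lem:tailbound}. This time we weight each sequence by $2^{5|Y|}$ instead of restricting to the tail event.

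The greedy procedure is deterministic given $(\bL,\bz)$, so each outcome determines exactly one sequence $A=(C_1,\dots,C_m)$ of chosen clauses, possibly the empty one. The events ``$A$ is chosen'' therefore partition the probability space, and
\[
\E\bigl[2^{5|Y(\bL,\bz)|}\bigr]=\sum_{A}\E\bigl[\Indicator[A\text{ is chosen}]\cdot 2^{5|Y(\bL,\bz)|}\bigr].
\]
The empty sequence gives $Y=\emptyset$, so it contributes at most $1$. For a nonempty $A$, define $\Delta_j$ and $\bsigma_j=|\bL\cap\Delta_j|$ as in \Cref{sec:tailbound-warmup}. When $A$ is the chosen sequence, $Y(\bL,\bz)$ is exactly the set of live variables of $C_1,\dots,C_m$, so $|Y|=\sum_j\bsigma_j$. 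As before, $A$ can be chosen only if each $\bsigma_j\ge k+1$ and every non-live variable of $\Delta_j$ is set by $\bz$ to falsify its literal in $C_j$. The $\Delta_j$ are disjoint, so these conditions are independent across $j$. For each fixed $j$, the probability that $\bsigma_j=\sigma_j$ and the falsification condition holds is at most $(3p)^{\sigma_j}$, by exactly the computation given there.

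Hence the contribution of a fixed nonempty $A$ is at most
\[
\prod_{j=1}^m\sum_{\sigma_j\ge k+1}\bigl(2^5\cdot 3p\bigr)^{\sigma_j}\le\Bigl(2\,(96p)^{k+1}\Bigr)^m.
\]
Here we use $96p=96/1024<1/10$, so each inner sum is dominated by a geometric series.

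There are at most $s^m$ sequences of length $m$, so
\[
\E\bigl[2^{5|Y|}\bigr]\le 1+\sum_{m\ge1}\Bigl(2s\,(96p)^{k+1}\Bigr)^m .
\]
Since $96p\le 2^{-3}$ and $k\ge\frac{\gamma}{2}\log n$, we get
\[
2s\,(96p)^{k+1}\le 2n^{\gamma}\,2^{-3\gamma\log n/2}=2n^{-\gamma/2}\le 1/2
\]
for large $n$. The geometric sum is then at most $1$, which gives the bound $2$.

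The main point to check carefully is the per-block estimate $(3p)^{\sigma_j}$ together with the independence across $j$. This is the same calculation as in \Cref{sec:tailbound-warmup}, now with $p=2^{-10}$ fixed rather than $\exp(-\Theta(\gamma))$. What makes the argument work is that the factor $2^5$ is absorbed because $p$ is a small absolute constant, while the $s^m=n^{\gamma m}$ term is beaten because the width threshold $k$ grows like $\gamma\log n$. This is precisely why $k=\lceil\frac{\gamma}{2}\log n\rceil$ is the right choice here.
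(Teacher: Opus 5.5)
Your proposal is correct and follows the paper's proof essentially step for step. It uses the same sum over greedy clause sequences, the same independence across the disjoint $\Delta_j$, the same per-block bound $(3p)^{\sigma_j}$ (so the $2^5$ factor turns each block into a geometric series with ratio $3/32$), and the same count of at most $s^m$ sequences of length $m$. You also spell out the final parameter check $2s(96p)^{k+1}\le 2n^{-\gamma/2}\le 1/2$, which the paper only asserts.
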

Before proving \Cref{lem:momentbound}, note that since $\bL$ has size exactly $\ell$ with probability $\Omega(1/n)$, we have from \Cref{lem:momentbound} that for $\bL$ a uniformly random size-$\ell$ subset of $[n]$ and $\smash{\bz\sim \{0,1\}^{[n] \setminus \bL}}$, it holds that
\[
    \Ex_{\bL \sim {[n] \choose \ell}}\sbra{\Ex_{\bz}\sbra{2^{5|Y(\bL,\bz)|}}} \leq O(n).
\]
Rewriting the inner expectation as a sum over all $2^{n-\ell}$ addresses $z$, Markov's inequality gives the following, which we will use later and is the main take-away from this subsection:

\begin{corollary}\label{simplecoro1}
Let $\bL$ be a uniformly random size-$\ell$ subset of $[n]$.
With probability at least $1-1/n$ we have
\begin{equation} 
\label{eq:exact-ell}
\sum_{z \in \zo^{[n] \setminus \bL}} 2^{5|Y(\bL,\bz)|} \leq O(n^2) 2^{n-\ell}.
\end{equation}
\end{corollary}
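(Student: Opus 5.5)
The plan is to derive the corollary from \Cref{lem:momentbound} in three steps: condition on the size of the live set, rewrite the inner expectation as a normalized sum over addresses, and apply Markov's inequality. The real work lies in the moment bound, which we are allowed to assume.

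\textbf{Step 1: conditioning on $|\bL|=\ell$.} Let $\bL'$ denote the $p$-biased random subset used in \Cref{lem:momentbound}, and let $X := 2^{5|Y(\bL',\bz)|}$, which is a nonnegative random variable. Conditioned on the event $|\bL'|=\ell$, the set $\bL'$ is uniformly distributed over $\binom{[n]}{\ell}$, by symmetry of the product distribution. Also, $\bz$ is uniform over $\zo^{[n]\setminus \bL'}$ in both models. Hence the joint law of $(\bL,\bz)$ in the corollary equals the conditional law of $(\bL',\bz)$ given $|\bL'|=\ell$.

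Since $\ell=\lfloor pn\rfloor$ is the mode of a $\mathrm{Bin}(n,p)$ random variable, we have $\Pr[|\bL'|=\ell]=\Omega(1/\sqrt{n})\ge \Omega(1/n)$; a standard binomial estimate suffices here. Nonnegativity of $X$ then gives
\[
\Ex\big[X \,\big|\, |\bL'|=\ell\big]\le \frac{\Ex[X]}{\Pr[|\bL'|=\ell]}\le \frac{2}{\Omega(1/n)} = O(n).
\]

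\textbf{Step 2: rewriting as a sum over addresses.} For a fixed size-$\ell$ set $L$, define
\[
G(L):=\Ex_{\bz}\big[2^{5|Y(L,\bz)|}\big]=2^{-(n-\ell)}\sum_{z\in\zo^{[n]\setminus L}}2^{5|Y(L,z)|}.
\]
Step 1 shows that $\Ex_{\bL\sim\binom{[n]}{\ell}}[G(\bL)]\le Cn$ for some absolute constant $C$.

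\textbf{Step 3: Markov's inequality.} The quantity $G(\bL)$ is nonnegative, so Markov's inequality gives $\Pr_{\bL}[G(\bL)> Cn^2]\le 1/n$. On the complementary event, which has probability at least $1-1/n$, we get
\[
\sum_{z} 2^{5|Y(\bL,z)|}\le Cn^2\cdot 2^{n-\ell}=O(n^2)\,2^{n-\ell},
\]
which is exactly \eqref{eq:exact-ell}.

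The only point needing care is Step 1: one must confirm that conditioning the product measure on $|\bL'|=\ell$ yields exactly the uniform size-$\ell$ distribution used by the algorithm, and that $\Pr[|\bL'|=\ell]$ is at least inverse-polynomial. Both facts are routine.
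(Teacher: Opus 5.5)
Your proposal is correct and is essentially the paper's argument: use $\Pr[|\bL'|=\ell]=\Omega(1/n)$ to transfer \Cref{lem:momentbound} to a uniformly random size-$\ell$ live set, giving $\Ex_{\bL}\Ex_{\bz}[2^{5|Y(\bL,\bz)|}]=O(n)$, then rewrite the inner expectation as $2^{-(n-\ell)}$ times the sum over addresses and apply Markov's inequality with threshold $O(n^2)$. One small imprecision: $\lfloor pn\rfloor$ is not always exactly the mode of $\mathrm{Bin}(n,p)$ (the mode is $\lfloor (n+1)p\rfloor$), but it is within one of the mode, so your $\Omega(1/\sqrt{n})$ estimate still holds, and so does the $\Omega(1/n)$ you actually need.
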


In the rest of this subsection we prove \Cref{lem:momentbound}. The proof is similar to that of \Cref{lem:tailbound} with just a few changes.
Consider any fixed sequence $A=$ $(C_1,\ldots,C_m)$ of clauses from $\calC$ that can be chosen by the greedy algorithm in that order to build $Y(\bL,\bz)$. 
Let
\[
    \Delta_j:=\operatorname{vars}(C_j)
       \setminus\bigcup_{i<j}\operatorname{vars}(C_i),
\]
and let $\bsigma_j$ denote the number of live variables in $\Delta_j$, i.e.~$\bsigma_j := |\bL \cap \Delta_j|$.
For $A$ to be the sequence of clauses chosen by the greedy algorithm to build $Y(\bL,\bz)$, a necessary condition is that for every $j\in [m]$, (1) $|\Delta_j|\ge k+1$; (2) $\bsigma_j \geq k+1$, and 
(3) every variable in $\Delta_j$ that is not in $\bL$ must receive the value in $\bz$ that falsifies~its literal~in $C_j$.
Let $\bH_j$ denote the compound event  that (1) and (2) and (3) all hold.
By the disjointness of $\Delta_1,\Delta_2,\dots$ we have that $\bH_1,\bH_2,\dots$ are independent, and hence we have 
\[
\E\sbra{
\Indicator\sbra{A\text{~is chosen}}\cdot 2^{5|Y(\bL,\bz)|}} \leq
\prod_{j=1}^m \E\sbra{\Indicator[\bH_j] \cdot 2^{5\bsigma_j}}.
\]
As in the earlier analysis,
the probability that $\bsigma_j$ takes a particular value $\sigma_j$ is at most
$(3p)^{\sigma_j}$,
so we have
\begin{align*}
\E\sbra{\Indicator[\bH_j]\cdot 2^{5\bsigma_j}}
\leq \sum_{\sigma_j > k} (3p)^{\sigma_j} \cdot 2^{5\sigma_j}
&=\sum_{\sigma_j > k} \pbra{\frac 3 {32}}^{\sigma_j} \tag{recalling that $p=2^{-10}$}\\
&= {\frac {32}{29}} \cdot \pbra{{\frac 3 {32}}}^{k+1},
\end{align*}
and hence
\begin{equation}
\E\sbra{
\Indicator\sbra{A\text{~is chosen}}\cdot 2^{5|Y(\bL,\bz)|}} \leq
\pbra{
{\frac {32} {29}} \cdot \pbra{{\frac 3 {32}}}^{k+1}
}^m.
\label{eq:summe}
\end{equation}

Now we are ready to upper bound $\E[2^{5|Y(\bL,\bz)|}]$ by summing \Cref{eq:summe} over all $A=(C_1,\dots,C_m).$ To begin,
the $m=0$ case contributes 1 to $\E[2^{5|Y(\bL,\bz)|}]$ since it corresponds to $|Y(\bL,\bz)|=0$.  For $m \geq 1$ there are at most $s^m$ possible sequences $A$ of length $m$, so all in all we have
\[
\E[2^{5|Y(\bL,\bz)|}]
\leq 1 + \sum_{m \geq 1} s^m 
\pbra{{\frac {32}{29}} \cdot \pbra{{\frac 3 {32}}}^{k+1}}^m \leq 2,
\]
where the last inequality is because our choice of parameters in \Cref{sec:overview-real} ensures that we have ${\frac {32} {29}} s \cdot \pbra{{\frac 3 {32}}}^{k+1} \leq 1/2.$
This finishes the proof of \Cref{lem:momentbound}. \qed

\subsection{A linear separator over feature space ${\cal F}$ for the $z$-subcube} \label{sec:linsep}
Fix a set $L$ and an address $z$. In this section, we will show that over our new feature space $\calF$ that is given by $\Phi$, each restricted function $F|_{L,z}$ corresponds to a large margin separator, where the margin depends on the size of the width-reducing set. Towards this end, we will define $\Gamma$ to be the following function of a parameter $t$:
$$\Gamma(t):=\left(1-\frac{1}{2\gamma}\right) \ell+\frac{b}{2\gamma} t.$$

We will then show

\begin{proposition} \label{prop:separator-for-z-subcube}
There is a weight vector $w^*_z$ over the feature space $\calF$ such that
\begin{flushleft}\begin{itemize}
    \item [(a)] $w^*_z$ defines a linear separator over $\R^{\cal F}$, with margin $1/4$, that agrees with the restricted target function $F|_{L,z}$:
    i.e.~
    \[
(-1)^{F|_{L,z}(x) + 1} \big(w^*_z \cdot \Phi(x)\big) \geq 1/4
\quad 
\text{for every~}x \in \zo^L.
    \]
    \item [(b)] the norm of $w^*_z$ is bounded:  $\|w^*_z\|^2 \leq 2^{\Gamma(|Y(\bL,z)|)+o(\ell)}$.
\end{itemize}\end{flushleft}
\end{proposition}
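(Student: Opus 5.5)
Let $Y = Y(L,z)$ and write $t=|Y|$. I will mirror the proof of \Cref{lem:PTF-with-margin}, with \Cref{thm:block-apx-degree} in place of \Cref{thm:apx-degree}. First I define the block decomposition that \Cref{thm:block-apx-degree} is applied to. Let $U_Y \subseteq [B]$ be the set of blocks $L_i$ that meet $Y$, so $|U_Y| \le t$. Take $V_0 := L_{U_Y}$, and let $V_1,\dots,V_{B'}$ be the remaining blocks, so $B' = B - |U_Y| \ge B - t$. By \eqref{eq:crucialYproperty}, every surviving clause $C|_{L,z}$ has at most $k$ literals outside $Y$. Hence it depends on at most $k$ of the blocks $V_1,\dots,V_{B'}$. (The blocks containing literals of $Y$ are all inside $V_0$.) So for each $i\le s$, the restricted CNF $h_i := g_i|_{L,z}$ satisfies the hypothesis of \Cref{thm:block-apx-degree}. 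I need $k\in[\log B', \sqrt{B'}/4]$. Since $B=\Theta(n/\gamma)$ and $k=\Theta(\gamma\log n)$ with $\gamma\le \log n$, this holds as long as $B'$ is not tiny. If $t$ is so large that $B'$ is tiny, e.g.\ $t \ge \ell/b - O(1)$, then $\Gamma(t)\ge \ell$. In that case I instead use the trivial separator $w^*_z = \sum_y (\pm 1)\Phi_{[B],y}$ with norm$^2$ $2^\ell$. So that case is easy.

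Apply \Cref{thm:block-apx-degree} to each $h_i$ with $\eps=1/(4s)$. This gives $q_i = a_0^{(i)}+\sum_{|T|\le\tau} h^{(i)}_{V_T\cup V_0}$, and I set $Q := \sum_i q_i - 1/2$. The sign/margin argument is exactly as in \Cref{lem:PTF-with-margin}: if $F|_{L,z}(x)=0$ then $Q(x)\le s\eps-1/2 = -1/4$, and otherwise $Q(x)\ge 1 - (s-1)\eps - 1/2 \ge 1/4$. Next I express $Q$ in the feature space. Each function $h_{V_T\cup V_0}$ depends only on blocks $T\cup U_Y$, so it equals $\sum_{y} h(y)\,\Phi_{T\cup U_Y,y}(x)$. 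Its weight vector has squared norm $\sum_y h(y)^2 \le 2^{|L_{T\cup U_Y}|}\|h\|_\infty^2 \le 2^{b(\tau + t)}\|h\|_\infty^2$. Constants go on $\Phi_\emptyset$. Features coming from different $T$ may coincide, so I bound the norm by the triangle inequality:
\[
\|w^*_z\| \le 1/2+\sum_i\Big(|a_0^{(i)}| + \sum_T 2^{b(\tau+t)/2}\|h^{(i)}_T\|_\infty\Big) \le s\, 2^{b(\tau+t)/2}\exp(10B'^{0.9}\log(4s)).
\]
Squaring gives $\|w^*_z\|^2 \le 2^{b\tau + bt + o(\ell)}$, since $B'^{0.9}\log s = o(\ell)$.

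The main obstacle is the exponent bookkeeping, i.e.\ checking $b\tau + bt \le \Gamma(t)+o(\ell)$. This is where the choices $b=\lceil 8\gamma\rceil$, $k\approx \frac{\gamma}{2}\log n$ and $p=2^{-10}$ must fit together. We have $b\tau \approx bB'(1-\log B'/(4k))$. Also $\log B'/(4k)\approx \log n/(2\gamma\log n) = 1/(2\gamma)$, up to lower-order terms using $B'\approx n/(2^{10}\cdot 8\gamma)$ and $\log(1/p)$ corrections. So $b\tau \lesssim (b(B-t))(1-\tfrac1{2\gamma}) + o(\ell)$. The naive count then gives $b\tau+bt \approx (1-\tfrac{1}{2\gamma})\ell + \tfrac{b}{2\gamma}t$, which is exactly $\Gamma(t)$.

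I expect the delicate points to be these three. First, the $\log B'$ vs.\ $\log n$ discrepancy: $\log B' = \log n - O(\log\gamma)$, and with $\gamma\le\log n$ I have to check that this only costs $o(\ell)$, or else absorb it by slightly adjusting constants. Second, the rounding in $b B\ge\ell$. Third, the fact that $V_0$ may include entire blocks only partially in $Y$, which is already accounted for by charging $b$ per block of $U_Y$. If the $\log B'$ correction turns out not to be $o(\ell)$, I would fall back to tightening the choice of $r$ in \Cref{thm:block-apx-degree}.
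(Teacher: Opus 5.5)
Your plan is the same as the paper's proof. You put the blocks meeting $Y(L,z)$ into $V_0$ and apply \Cref{thm:block-apx-degree} to each $g_i|_{L,z}$ with $\eps=1/(4s)$. You then take $\sum_i q_i-1/2$, expand each piece in the block-indicator features, and fall back to the trivial encoding when $B'$ is too small. The margin argument and the per-piece bound $\sum_y h(y)^2\le 2^{|L_{T\cup U_Y}|}\|h\|_\infty^2$ are fine.

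\textbf{The exponent bookkeeping fails as written.} You substitute $t=|Y|$ for $|U_Y|$ in two places that pull in opposite directions. The bound $|L_{T\cup U_Y}|\le b(\tau+t)$ uses $|U_Y|\le t$. But $\tau$ is computed from $B'=B-|U_Y|\ge B-t$, so your step $b\tau\lesssim b(B-t)(1-\frac{1}{2\gamma})$ has the inequality backwards. In fact $b\tau+bt-\Gamma(t)\approx(1-\frac{1}{2\gamma})\,b\,(t-|U_Y|)$. If $Y$ consists of whole blocks, this equals $(1-\frac{1}{2\gamma})(b-1)|Y|$, which is $\Theta(\ell)$ when $|Y|=\Theta(B)$. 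For, say, $|Y|=B/2$ we have $\Gamma(|Y|)\le \ell-\Omega(\ell/\gamma)$, so the trivial separator (squared norm $2^\ell/4$) does not rescue you either.

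The fix, which is exactly what the paper does, is to use the number of touched blocks $u:=|U_Y|$ as the parameter. Prove $b(\tau+u)\le\Gamma(u)+o(\ell)$ directly. Only at the end invoke $u\le|Y|$ and the monotonicity of $\Gamma$ to get $2^{\Gamma(|Y(L,z)|)+o(\ell)}$.

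\textbf{The case split is also in the wrong place.} Your estimate $\log B'=\log n-O(\log\gamma)$ holds only when $B'$ is within a bounded factor of $B$. Your case split, however, only excludes tiny $B'$ (the theorem itself needs $B'\ge 16k^2$). You can close this in either of two ways:
\begin{itemize}
\item Split where the paper does, at $B'\ge B/\log n$. Then $\log B'=(1-o(1))\log n$. In the complementary case $\Gamma(u)\ge \ell-\ell/(2\gamma\log n)$, so the trivial encoding fits inside $2^{\Gamma(u)+o(\ell)}$.
\item Bound the combined deficit directly: $b(\tau+u)-\Gamma(u)\le bB'(\frac{1}{2\gamma}-\frac{\log B'}{4k})+o(\ell)$. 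This is $o(\ell)$ uniformly for $16k^2\le B'\le B$, because $B'\log(n/B')/\log n=o(\ell)$ on that range.
\end{itemize}
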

\begin{proof}
Let $G \subseteq [B]$ denote the set of blocks that touch $Y(L,z)$ and let $t = |G|$, for which we always have $t\le |Y(L,z)|$.
Let $B'=B-t$, i.e., the number of blocks that are not touched by $Y(L,z)$. The analysis is then composed of two cases: when $B'$ is large we will use \Cref{thm:block-apx-degree} to create our linear separator, and when $B'$ is small it will suffice to naively expand $F|_{L,z}$ as a threshold function over our features. 

\medskip

\noindent \textbf{Case 1:} $B'\ge B/\log (n)$.
The function $F|_{L,z}$ can be written as $\lor_{i\le s} (g_i|_{L,z})$,
where each $g_i|_{L,z}$ is a CNF over live variables in $L$ such that every clause contains variables from at most $k$
blocks outside of those of $G$.

We will now form blocks $V_i$ so as to apply \Cref{thm:block-apx-degree} to each $g_i|_{L,z}$. Let $V_0 = L_G$ and $V_1, \dots, V_{B'}$ denote the remaining blocks $L_i$ outside of $G$. By our assumptions on $\gamma$, it follows that $k \geq \log(n) \geq \log(B)$ and $k \leq \sqrt{B'}/4$, since $B' \geq B / \log(n)$.
It follows from \Cref{thm:block-apx-degree}, setting $\eps=1/(4s)$, that each $g_i|_{L,z}$ can be approximated by 
$$
\Psi_i=a_0^{(i)}+
\sum_{U} h_U^{(i)},
$$
where the sum is only over $U\subseteq [B]\setminus G$ of size at most
$$
\tau= B' - \frac{B' \log(B')}{4k} + 10 (B')^{0.9} \log\left(4s\right)
$$
and $h_U^{(i)}$ only depends on variables in $L_{G\cup U}$,
and $\Psi_i$ satisfies $|a_0^{(i)}|+\sum_U \|h_U^{(i)}\|_\infty\le 2^{o(\ell)}$. We can then note that any function in this expansion depends on at most
\begin{align*}
    b(\tau + t) &\leq b \left( B' - \frac{B' \log(B')}{4k} + 10 (B')^{0.9} \log\left(4s\right) + t \right) \\
    &= b\left( B - \frac{(B-t) \log(B')}{4k} \right) + o(\ell) \\
    &\leq \ell - \frac{\ell \log(B')}{4k} + \frac{b \log(B')}{4k} \cdot t + o(\ell) \\
    &\leq \ell - \frac{\ell \log(B')}{4k} + \frac{b}{2 \gamma} \cdot t + o(\ell) \\
    &\leq \ell - \frac{\ell}{2 \gamma} + \frac{b}{2 \gamma} \cdot t + o(\ell) = \Gamma(t) + o(\ell)
\end{align*}
variables in $L$, where the final inequality used $\log(B') = (1 - o(1)) \log(n)$ by our assumption on the size of $B'$ and the fact that $\gamma \leq \log(n)$.

Now set $\Psi=\sum_{i=1}^s \Psi_i-1/2$. Note that we can then write $\Psi$ as a linear function $w_z \cdot \Phi(x)$: To see this, decompose each $h_U^{(i)}$ into
\[h_U^{(i)}(x) = \sum_{y \in \zo^{L_{G\cup U}}} h_U^{(i)}(0^{L \setminus L_{G\cup U}} \circ y) \cdot \Indicator[x_{L_{G\cup U}} = y]\]
where we crucially used that $h_U^{(i)}(x)$  only depends on the coordinates in $G \cup U$. Thus, we can take $w_z^*$ to satisfy
    \[(w_z^*)_\emptyset = - 1/2 + \sum_{i=1}^s a_0^{(i)} \quad \text{ and } \quad (w_z^*)_{G \cup U,y} = \sum_{i=1}^s h_U^{(i)}(0^{L \setminus L_{G\cup U}} \sqcup y) \]
    for any $U\subseteq [B]\setminus G$ of size at most $\tau$.

By the properties of each $\Psi_i$ given by \Cref{thm:block-apx-degree}, if $F|_{L,z}(x) = 1$, then
    \[\Psi(x) \geq \frac{1}{4} \]
and if $F|_{L,z} = 0$ then
    \[\Psi(x) \in [-3/4, -1/4],\]
yielding item $(a)$. 

For item $(b)$, we can compute 
\begin{align*}
    \|w_z^*\|^2 &\leq \left( \frac{1}{2} + \sum_{i=1}^s |a_0^{(i)}| \right)^2 + \sum_{U} 2^{\Gamma(t) + o(\ell)} \cdot \left( \sum_{i=1}^s \left\|h_U^{(i)} \right\|_{\infty} \right)^2 \\
    &\leq \left( \frac{1}{2} + \sum_{i=1}^s |a_0^{(i)}| \right)^2 + 2^{\Gamma(t) + o(\ell)} \cdot \left(\sum_{i=1}^s \sum_{U}  \left\|h_U^{(i)} \right\|_{\infty} \right)^2 \\
    &\leq 2^{\Gamma(t) + o(\ell)} \cdot \left(\sum_{i=1}^s \frac{1}{2} + |a_0^{(i)}| + \sum_{U}  \left\|h_U^{(i)} \right\|_{\infty} \right)^2\\
    &\leq 2^{\Gamma(t) + o(\ell)}\cdot 10 s^2 \exp \left( 20 \ell^{0.9} \log(4s) \right) \\
    &\leq 2^{\Gamma(|Y(L,z)|) + o(\ell)}
\end{align*}
as desired.\medskip

\noindent \textbf{Case 2:} $B' \leq B/\log(n)$.
In this case, $t \geq B(1 - 1/\log(n))$ and thus
\begin{align*}
    \Gamma(t) &= \left (1 - \frac{1}{2\gamma} \right) \ell + \frac{b}{2 \gamma} t  \geq \left (1 - \frac{1}{2\gamma} \right) \ell + \frac{b}{2 \gamma} B\left(1-\frac{1}{\log(n)}\right)  \ge \ell - \frac{\ell}{2\gamma \log(n)}.
\end{align*}
Thus taking the $o(\ell)$ to be sufficiently large i.e. $\omega(\ell/\log(n))$, we then have that the expression in property $(b)$ satisfies $2^{\Gamma(t) + o(\ell)} \geq 2^\ell$. In this regime we can use a trivial encoding of $F|_{L,z}$; namely, let
\[\quad (w_z)_{[B], y} = F|_{L,z}(y) - \frac{1}{2} \qquad \text{for all~}y \in \zo^{L}.\]
Property $(a)$ is then trivially satisfied and the weight is easily bounded by
    \[\|w_z\|^2 \leq \sum_{y \in \zo^{L}} \left(F|_{L,z}(y) - \frac{1}{2}\right)^2 = \frac{1}{4} 2^\ell \leq 2^{\Gamma(t) + o(\ell)} \leq 2^{\Gamma(|Y(L,z)|) + o(\ell)},\]
    as required by property $(b)$.
This finishes the proof of \Cref{prop:separator-for-z-subcube}.
\end{proof}

\subsection{Concluding the proof of the strong form of \Cref{thm:main}}

We now combine the ingredients above to prove \Cref{thm:main} with $c_\gamma = \Omega(1/\gamma)$. As discussed earlier, our algorithm randomly chooses a subset $\bL$ of $\ell$ variables from $x_1,\dots,x_n$, and for each address $z \in \zo^{[n] \setminus \bL}$, it runs a separate copy of Perceptron over the subcube $z \circ \zo^{\bL}$ using the feature map $\Phi.$ 

We start with the total mistake bound summed over all $2^{n-\ell}$ copies of Perceptron. Since $\|\Phi(x)\|_2^2 = 2^B$, we can combine \Cref{prop:separator-for-z-subcube} and \Cref{lem:perceptron} to get that for each address $z$, the copy of Perceptron running over subcube $z \circ \zo^{L}$ makes at most
    \[M_z := 2^B \cdot 2^{\Gamma(|Y(L,z)|) + o(\ell)} \cdot 16 = 2^{\Gamma(|Y(L,z)|) + B + o(\ell)}\]
mistakes. Thus, summing over all addresses, the algorithm makes at most
\begin{align*}
    \sum_{z \in \zo^{n \setminus \bL}} M_z &= \sum_{z \in \zo^{n \setminus \bL}} 2^{\Gamma(|Y(L,z)|) + B + o(\ell)} \\
    &= \sum_{z \in \zo^{n \setminus \bL}} 2^{\left(1 - \frac{1}{2\gamma}\right) \ell + \frac{b}{2\gamma} \cdot |Y(L,z)| + B + o(\ell)} \\
    &= 2^{\left(1 - \frac{1}{2\gamma}\right) \ell + B + o(\ell)} \sum_{z \in \zo^{n \setminus \bL}} 2^{\frac{b}{2\gamma} \cdot |Y(L,z)|} 
\end{align*}mistakes in total. Since $\frac{b}{2\gamma} \leq 5$, \Cref{simplecoro1} then gives that, with high probability, the total number of mistakes is at most 
\[ 2^{\left(1 - \frac{1}{2\gamma}\right) \ell + B + o(\ell)} \cdot O(n^2) 2^{n-\ell}  
= 2^{n - \frac{\ell}{2\gamma}  + B + o(\ell)}.
\]

We will now use this mistake bound to bound the total runtime of our algorithm as defined in \Cref{sec:model}.
In each execution of a copy of the Perceptron algorithm, for each input example $x$ we compute only the $2^B$ nonzero features of the feature expansion $\Phi(x)$, and across the entire run of that copy of Perceptron we store only the non-zero entries of its hypothesis vector $w.$ When we make a mistake, we update the entries of $w$ by going through each non-zero entry in $y \cdot \Phi(x)$ and adding it to $w$. So given an input example $x$ for one copy of Perceptron, computing the $2^B$ non-zero entries of $\Phi(x)$, evaluating $w \cdot \Phi(x)$, and updating $w$ if necessary takes time $\poly(n) \cdot 2^B$ using an appropriate data structure.
Thus the total running time across all the copies of Perceptron is bounded by 
\[\overbrace{\poly(n) \cdot 2^B}^{\text{hypothesis evaluation time + hypothesis update time}}\cdot \overbrace{2^{n - \frac{\ell}{2\gamma}  + B + o(\ell)}}^{\text{mistake bound}} \leq 2^{n - \frac{\ell}{6\gamma}} = 2^{n(1 - \Omega(\gamma^{-1}))} \]
as desired.
This concludes the proof of the strong form of \Cref{thm:main}.

\section*{Acknowledgements}
X.C. is supported by NSF grants IIS-1838154, CCF-2106429, and CCF-2107187. S.P. was supported
by NSF grants CCF-2106429, CCF-2107187, CCF-2218677, ONR grant ONR-13533312, and a
NSF Graduate Student Fellowship. R.A.S. is supported in part by NSF awards CCF-2106429 and
CCF-2211238.

\begin{flushleft}
\bibliographystyle{alpha}
\bibliography{allrefs}
\end{flushleft}

\end{document}